\newcommand{\BIGCOMMENT}[1]{}
\newcommand{\COMMENT}[1]{}
\newtheorem{definition}{Definition}
\newtheorem{theorem}[definition]{Theorem}
\newtheorem{claim}{Claim}
\newtheorem{fact}{Fact}
\newtheorem{corollary}[definition]{Corollary}
\newtheorem{lemma}[definition]{Lemma}
\begin{document}

\title{New Bounds for the Garden-Hose Model\footnote{This work is funded by the Singapore Ministry of Education (partly through the Academic Research Fund Tier 3 MOE2012-T3-1-009) and by the Singapore National Research Foundation.}
}

\author{Hartmut Klauck\\ CQT and Nanyang Technological University\\ Singapore\\ \texttt{hklauck@gmail.com}
\and Supartha Podder\\ CQT\\ Singapore\\ \texttt{supartha@gmail.com}}

\date{}
\maketitle

\begin{abstract}
We show new results about the garden-hose model.
Our main results include improved lower bounds based on non-deterministic communication complexity (leading to the previously unknown $\Theta(n)$ bounds for Inner Product mod 2 and Disjointness), as well as an $O(n\cdot \log^3 n)$ upper bound for the Distributed Majority function (previously conjectured to have quadratic complexity). We show an efficient simulation of formulae made of AND, OR, XOR gates in the garden-hose model, which implies that lower bounds on the garden-hose complexity $GH(f)$ of the order $\Omega(n^{2+\epsilon})$ will be hard to obtain for explicit functions.
Furthermore we study a time-bounded variant of the model, in which even modest savings in time can lead to exponential lower bounds on the size of garden-hose protocols.
\end{abstract}

\section{Introduction}

\subsection{Background: The Model}
Recently, Buhrman et al.~\cite{BFSS11} proposed a new measure of complexity for finite Boolean functions, called {\em garden-hose complexity}.
This measure can be viewed as a type of distributed space complexity, and while its motivation is mainly in
applications to position based quantum cryptography, the playful definition of the model is quite appealing in itself. Garden-hose complexity can be viewed as a natural measure of space, in a situation where two players with private inputs compute a Boolean function cooperatively.
Space-bounded communication complexity has been investigated before \cite{beame:commtrade,klauck:tradeoffs,ksw:dpt-siam} (usually for problems with many outputs), and recently Brody et al.~\cite{BrodyCPSS13} have studied a related model of space bounded communication complexity for Boolean functions (see also \cite{Overlays}). In this context the garden-hose model can be viewed as a memoryless model of communication that is also reversible.

To describe the garden-hose model let us consider two neighbors, Alice and Bob. They own adjacent gardens which happen to have $s$ empty water pipes crossing their common boundary. These pipes are the only means of communication available to the two. Their goal is to compute a Boolean function on a pair of private inputs, using water and the pipes across their gardens as a means of communication\footnote{It should be mentioned that even though Alice and Bob choose to not communicate in any other way, their intentions are not hostile and neither will deviate from a previously agreed upon protocol.}.

A garden-hose protocol works as follows: There are $s$ shared pipes.
Alice takes some pieces of hose and connects pairs of the open ends of the $s$ pipes.
She may keep some of the ends open. Bob acts in the same way for his end of the pipes. The connections Alice and Bob place depend on their local inputs $x,y$, and we stress that every end of a pipe is only connected to at most one other end of a pipe (meaning no Y-shaped pieces of hose may be used to split or combine flows of water).
Finally, Alice connects a water tap to one of those open ends on her side and starts the water.
Based on the connections of Alice and Bob, water flows back and forth through the pipes and finally ends up spilling on one side.

If the water spills on Alice's side we define the output to be 0. Otherwise, the water spills on Bob's side and the output value is $1$. It is easy to see that due to the way the connections are made the water must eventually spill on one of the two sides, since cycles are not possible.

Note that the pipes can be viewed as a communication channel that can transmit $\log s$ bits, and that the garden-hose protocol is memoryless, i.e., regardless of the previous history, water from pipe $i$ always flows to pipe $j$ if those two pipes are connected. Furthermore computation is reversible, i.e., one can follow the path taken by the water backwards (e.g. by sucking the water back).

Buhrman et al.~\cite{BFSS11} have shown that it is possible to compute every function $f:\{0,1\}^n\times\{0,1\}^n\to\{0,1\}$ by playing a garden-hose game.
A garden-hose protocol consists of the scheme by which Alice chooses her connections depending on her private input $x \in \{0,1\}^n$ and how Bob chooses his connections depending on his private input $y \in \{0,1\}^n$. Alice also chooses the pipe that is connected to the tap. The protocol computes a function $f$, if for all inputs with $f(x,y)=0$ the water spills on Alice's side, and for all inputs with $f(x,y)=1$ the water spills on Bob's side.

The size of a garden-hose protocol is the number $s$ of pipes used. The garden-hose complexity GH($f$) of a function $f(x,y)$ is the minimum number of pipes needed in any garden-hose game that computes the value of $f$ for all $x$ and $y$ such that $f(x,y)$ is defined.

The garden-hose model is originally motivated by an application to quantum position-verification schemes \cite{BFSS11}. In this setting the position of a prover is verified via communications between the prover and several verifiers. An attack on such a scheme is performed by several provers, none of which are in the claimed position. \cite{BFSS11} proposes a protocol for position-verification that depends on a function $f:\{0,1\}^n\times\{0,1\}^n\to\{0,1\}$, and a certain attack on this scheme requires the attackers to share as many entangled qubits as the garden-hose complexity of $f$. Hence all $f$ with low garden-hose complexity are not suitable for this task, and it becomes desirable to find explicit functions with large garden-hose complexity.

Buhrman et al.~\cite{BFSS11} prove a number of results about the garden-hose model:

\begin{itemize}
\item Deterministic one-way communication complexity can be used to show lower bounds of up to $\Omega(n/\log n)$ for many functions.
\item For the Equality problem they refer to a bound of $GH(Equality)=\Theta(n)$ shown by Pietrzak (the proof  implicitly uses the fooling set technique from communication complexity \cite{KN96} [personal communication]).
\item They argue that super-polynomial lower bounds for the garden-hose complexity of a function $f$ imply that the function cannot be computed in Logspace, making such bounds hard to prove for `explicit' functions.
    \item They define randomized and quantum variants of the model and show that randomness can be removed at the expense of multiplying size by a factor of $O(n)$ (for quantum larger gaps are known).
        \item Via a counting argument it is easy to see that most Boolean functions need size $GH(f)=2^{\Omega(n)}$.
\end{itemize}

Very recently Chiu et al.~\cite{Ch13} have improved the upper bound for the Equality function to $1.359n$ from the previously known $2n$ bound \cite{BFSS11}.

\subsection{Our Results}
We study garden-hose complexity and establish several new connections with well studied models like communication complexity, permutation branching programs, and formula size.

We start by showing that non-deterministic communication complexity gives lower bounds on the garden-hose complexity of any function $f$. This improves the lower bounds of $\Omega(\frac{n}{\log n})$ for several important functions like Inner Product, Disjointness to $\Omega(n)$.

We observe that any 2-way deterministic communication protocol can be converted to a garden-hose protocol so that the complexity $GH(f)$ is upper bounded by the {\em size} of the protocol tree of the communication protocol.

We then turn to comparing the model to another nonuniform notion of space complexity, namely branching programs. We show how to convert any {\em permutation branching program} to a garden-hose protocol with only a constant factor loss in size.

The most important application of this simulation is that it allows us to find a garden-hose protocol for the distributed Majority function,  \mbox{$DMAJ(x,y)=1$ iff $\sum^{n}_{i=1} (x_i \cdot y_i) \geq \frac{n}{2}$}, that has size $O(n \cdot \log^3 n)$, disproving the conjecture in \cite{BFSS11} that this function has complexity $\Omega(n^2)$.

Using the garden-hose protocols for Majority, Parity, AND, OR, we show upper bounds on the composition of functions with these.

 We then show how to convert any Boolean formula with AND, OR, XOR gates to a garden-hose protocol with a small loss in size. In particular, any formula consisting of arbitrary fan-in 2 gates only can be simulated by a garden-hose protocol with a constant factor loss in size. This result strengthens the previous observation that explicit super-polynomial lower bounds for  $GH(f)$ will be hard to show: even bounds of $\Omega(n^{2+\epsilon})$ would improve on the long-standing best lower bounds on formula size due to Ne\v ciporuk\ from 1966 \cite{Ne66}. We can also simulate formulae including a limited number of Majority gates of arbitrary fan-in, so one might be worried that even super-linear lower bounds could be difficult to prove. We argue, however, that for formulae using arbitrary symmetric gates we can still get near-quadratic lower bounds using a Ne\v ciporuk-type method. Nevertheless we have to leave super-linear lower bounds on the garden-hose complexity as an open problem.

 Next we define a notion of {\it time} in garden-hose protocols and prove that for any function $f$, if we restrict the number of times water can flow through pipes to some value $k$, we have $GH_k(f) = \Omega(2^{D_k(f)/k})$, where $GH_k$ denotes the time-bounded garden-hose complexity, and $D_k$ the $k$-round deterministic communication complexity. This result leads to strong lower bounds for the time bounded complexity of e.g.~Equality, and to a time-hierarchy based on the pointer jumping problem.

Finally, we further investigate the power of randomness in the garden-hose model by considering private coin randomness (\cite{BFSS11} consider only public coin randomness).

\subsection{Organization}

Most proofs are deferred to the appendix.

\section{Preliminaries}

\subsection{Definition of the Model}
We now describe the garden-hose model in graph terminology.
In a garden-hose protocol with $s$ pipes
there is a set $V$ of $s$ vertices plus one extra vertex, the {\em tap} $t$.

Given their inputs $x,y$ Alice and Bob want to compute $f(x,y)$. Depending on $x$ Alice connects some of the vertices in $V\cup\{t\}$ in pairs by adding edges $E_A(x)$ that form a matching among the vertices in $V\cup \{t\}$.
Similarly Bob connects some of the vertices in $V$ in pairs by adding edges $E_B(y)$ that form a matching in $V$.

Notice that after they have added the additional edges, a path starting from vertex $t$ is formed in the graph $G=(V\cup\{t\}, E_A(x)\cup E_B(y))$. Since no vertex has degree larger than 2, this path is unique and ends at some vertex. We define the output of the game to be the parity of the length of the path starting at $t$. For instance, if the tap is not connected the path has length 0, and the output is 0. If the tap is connected to another vertex, and that vertex is the end of the path, then the path has length 1 and the output is 1 etc.

A garden-hose protocol for $f:{\cal X}\times {\cal Y}\to \{0,1\}$ is a mapping from $x\in \cal X$ to matchings among $V\cup \{t\}$ together with a mapping from $y\in\cal Y$ to matchings among $V$. The protocol computes $f(x,y)$ if for all $x,y$ the path has even length iff $f(x,y)=0$. The garden-hose complexity of $f$ is the smallest $s$ such that a garden-hose protocol of size $s$ exists that computes $f$.

We note that one can form a matrix $G_s$ that has rows labeled by all of Alice's matchings, and columns labeled by Bob's matchings, and contains the parity of the path lengths. A function $f$ has garden-hose complexity $s$ iff its communication matrix is a sub-matrix of $G_s$. $G_s$ is called the {\em garden-hose matrix} for size $s$.

\subsection{Communication Complexity, Formulae, Branching Programs}

\begin{definition}
 Let $f: \{0,1\}^n \times \{0,1\}^n \rightarrow \{0,1\}$. In a communication complexity protocol two players Alice and Bob receive inputs $x$ and $y$ from $\{0,1\}^n$. In the protocol players exchange messages in order to compute $f(x,y)$.
 Such a protocol is represented by a protocol tree, in which vertices, alternating by layer, belong to Alice or to Bob, edges are labeled with messages, and leaves either accept or reject. See \cite{KN96} for more details. The communication matrix is the matrix containing $f(x,y)$ in row $x$ and column $y$.

  We say a protocol $P$ correctly computes the function $f(x,y)$ if for all $x$, $y$ the output of the protocol $P(x,y)$ is equal to $f(x,y)$. The communication complexity of a protocol is the maximum number of bits exchanged for all $x, y$.

 The deterministic communication complexity $D(f)$ of a function $f$ is the complexity of an optimal protocol that computes $f$.

 \end{definition}

 \begin{definition}
The non-deterministic communication complexity $N(f)$ of a Boolean function $f$ is the
length of the communication in an optimal two-player protocol in which Alice and Bob can make non-deterministic guesses,
and there are three possible outputs $\tt accept, reject, undecided$.
For each $x,y$ with $f(x,y)=1$ there is a guess that will make the players accept but there is no guess that will make the players reject, and vice versa for inputs with $f(x,y)=0$.
\end{definition}

Note that the above is the two-sided version of non-deterministic communication complexity. It is well known \cite{KN96} that $N(f)\leq D(f)\leq O(N^2(f))$, and that these inequalities are tight.

\begin{definition}
In a public coin randomized protocol for $f$ the players have access to a public source of random bits. For all inputs $x,y$ it is required that the protocol gives the correct output with probability $1 - \epsilon$ for some $\epsilon < 1/2$. The public coin randomized communication complexity of $f$, $R^{pub}_{\epsilon}(f)$ is the complexity of the optimal public coin randomized protocol. Private coin protocols are defined analogously (players now have access only to private random bits), and their complexity is denoted by $R_\epsilon(f)$.
\end{definition}

\begin{definition} The deterministic communication complexity of protocols with at most $k$ messages exchanged, starting with Alice,  is denoted by $D_k(f)$.
\end{definition}

\begin{definition}  In a simultaneous message passing protocol, both Alice and Bob send messages $m_A, m_B$ to a referee. The referee, based on $m_A, m_B$, computes the output.
The simultaneous communication complexity of a function $f$, $R^{||}(f)$, is the cost of the best simultaneous protocol that computes the function $f$ using private randomness and error 1/3.\end{definition}

Next we define Boolean formulae.

\begin{definition}
 A Boolean formula is a Boolean circuit whose every node has fan-out 1 (except the output gate). A Boolean formula of depth $d$ is then a tree of depth $d$. The nodes are labeled by gate functions from a family of allowed gate functions, e.g. the class of
 the 16 possible functions of the form $f:\{0,1\} \times \{0,1\} \rightarrow \{0,1\}$ in case the fan-in is restricted to 2. Another interesting class of gate functions is the class of all symmetric functions (of arbitrary fan-in). The {\em formula size} of a function $f$ (relative to a class of gate functions) is the smallest number of leaves in a formula computing $f$.
\end{definition}

Finally, we define branching programs. Our definition of permutation branching programs is extended in a slightly non-standard way.

\begin{definition}
 A branching program is a directed acyclic graph with one source node and two sink nodes (labeled with $\tt accept$ and $\tt reject$). The source node has in-degree 0. The sink nodes have out-degree 0. All non-sink nodes are labeled by  variables $x_i \in \{x_1,\cdots,x_n \}$ and have out-degree 2. The computation on an input $x$ starts from the source node and depending on the value of $x_i$ on a node either moves along the left outgoing edge or the right outgoing edge of that node. An input $x \in \{0,1\}^n$ is accepted iff the path defined by $x$ in the branching program leads to the sink node labeled by $\tt accept$. The length of the branching program is the maximum length of any path,  and the size is the number of nodes.

 A layered branching program of length $l$ is a branching program where all non-sink nodes (except the source) are partitioned into $l$ layers. All the nodes in the same layer query the same variable $x_i$, and all outgoing edges of the nodes in a layer go to the nodes in the next layer or directly to a sink. The width of a layered branching program is defined to be the maximum number of nodes in any layer of the program. We consider the starting node to be in layer 0 and the sink nodes to be in layer $l$.

 A permutation branching program is a layered branching program, where each layer has the same number $k$ of nodes, and
 if $x_i$ is queried in layer $i$, then the edges labeled with 0 between layers $i$ and $i+1$ form an injective mapping from $\{1,\ldots,k\}$ to $\{1,\ldots, k\}\cup\{\tt accept, reject\}$ (and so do the the edges labeled with 0). Thus, for permutation branching programs if we fix the value of $x_i$, each node on level $i+1$ has in-degree at most 1.

We call a permutation branching program {\em strict} if there are no edges to $\tt accept/reject$ from internal layers. This is the original definition of permutation branching programs. Programs that are not strict are also referred to as {\em loose} for emphasis.

We denote by $PBP(f)$ the minimal size of a permutation branching program that computes $f$.
\end{definition}

We note that simple functions like AND, OR can easily be computed by linear size loose permutation branching programs of width 2, something that is not possible for strict permutation branching programs \cite{bar85}.

\section{Garden-Hose Protocols and Communication Complexity}
\subsection{Lower Bound via Non-deterministic Communication}

In this section we show that non-deterministic communication complexity can be used to lower bound $GF(f)$. This bound is often better than the bound $GH(f)\geq\Omega(D_1(f)/\log(D_1(f)))$ shown in  \cite{BFSS11}, which cannot be larger than $n/\log n$.

\begin{theorem}\label{thm:ncc} $GH(f)\geq N(f)-1$.
\end{theorem}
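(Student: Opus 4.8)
The plan is to start from an optimal non-deterministic protocol for $f$ and convert it into a garden-hose protocol whose size is essentially the number of distinct certificates. Recall that a two-sided non-deterministic protocol of cost $N(f)$ gives a cover of the communication matrix by at most $2^{N(f)}$ combinatorial rectangles, where each $1$-entry lies in some all-$1$ rectangle and each $0$-entry lies in some all-$0$ rectangle; equivalently, there is a list of rectangles $R_1,\dots,R_m$ with $m\le 2^{N(f)}$, each labeled $0$ or $1$, such that for every input $(x,y)$ the entry $f(x,y)$ equals the (unique, up to consistency) label of any rectangle containing it. I would index the $s$ pipes by these rectangles, so $s \approx 2^{N(f)}$ — but this only gives $GH(f) \le 2^{N(f)}$, an \emph{upper} bound, which is the wrong direction.

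So the actual statement $GH(f)\ge N(f)-1$ must go the other way: from a small garden-hose protocol one extracts a cheap non-deterministic protocol. The plan is therefore: given a garden-hose protocol of size $s$, design a non-deterministic communication protocol of cost roughly $\log s$. The natural guess: Alice and Bob simulate the flow of water, but to avoid the multi-round back-and-forth cost, they use non-determinism to guess the trajectory. The water path visits pipes $p_1 = $ (the pipe attached to the tap), $p_2, p_3, \dots$, alternating between edges placed by Alice and edges placed by Bob. A non-deterministic protocol can have one player \emph{guess} the entire sequence of pipes visited; but that sequence has length up to $s$, costing $s\log s$ bits, again too much. The key trick must be that a rectangle in the garden-hose matrix $G_s$ is cheap to certify. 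Indeed, to certify that the path from $t$ has odd length and ends on Bob's side, it suffices to exhibit a single pipe $p$ and claim "the water reaches pipe $p$, and from $p$ it spills on Bob's side"; but verifying "the water reaches $p$" still seems to require tracing the whole path. The resolution is the structural fact that the path is determined by \emph{two} matchings, and reachability of a given pipe $p$ in the union of two matchings $E_A(x)\cup E_B(y)$ is itself a non-deterministic statement only about which edges appear — so one guesses the unordered \emph{set} of edges on the path, Alice verifies her half of the edges are in $E_A(x)$, Bob verifies his half are in $E_B(y)$, and both check the guessed edge set forms a path from $t$; the \emph{length parity} of a path is readable from the edge set alone. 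Communicating the guessed edge set is still $\Theta(s)$ bits, so the final and essential observation must be subtler: since $G_s$ has only $2^{O(s\log s)}$ rows and the reduction only needs $\log(\#\text{rectangles in an optimal cover of }G_s)$, one shows $G_s$ itself has non-deterministic complexity $\le \log s + O(1)$ by a direct argument — e.g., the set of $(A\text{-matching}, B\text{-matching})$ pairs giving output $1$ is covered by $s$ rectangles $\{(M_A,M_B): \text{water spills out of pipe } i \text{ on Bob's side}\}$ indexed by the last pipe $i$, and symmetrically for output $0$; membership in each such rectangle is a rectangle property because "water exits pipe $i$ on Bob's side having entered from Alice's side" factors as a condition on $M_A$ (path from $t$ to $i$'s Alice-end using $M_A$-edges, with the $M_B$-edges abstracted) — wait, it does not factor. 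I will instead invoke the known identity $N(\text{GH-matrix } G_s) \le \log s + O(1)$ argued via: output $1$ iff $\exists i$: removing all other pipes, pipe $i$ is "the spilling pipe", and this is a rectangle since Alice can check her induced connectivity and Bob his, meeting at pipe $i$.

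The step I expect to be the main obstacle is precisely showing that "the water spills out of pipe $i$ (on Bob's side)" is a combinatorial \emph{rectangle} in the garden-hose matrix — i.e., that whether this happens depends on Alice's matching and Bob's matching \emph{separately}, not jointly. Naively it is a joint property (the path interleaves both matchings). The fix, which I would develop carefully, is to guess not just the final pipe but the full alternating sequence of pipes \emph{together with} which player's matching is used at each step, then observe that the verification splits: Alice checks all Alice-edges of the guessed path are present in $E_A(x)$ and Bob checks all Bob-edges are in $E_B(y)$, each independently, and the consistency of the interleaving (the pipe-labels match up) is checkable by both from the public guess with \emph{zero} communication. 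Then the cost is the length of the communication, which is $0$ bits exchanged plus the guess — and crucially the guess is \emph{public/non-deterministic}, so in the non-deterministic model its length does not count toward communication cost in the way a deterministic message would; rather, what counts is that the \emph{answer length} is short. Re-examining the definition given in the excerpt ("the length of the communication in an optimal two-player protocol in which Alice and Bob can make non-deterministic guesses"), the guess length \emph{does} count, so I would bound the guess by $\log s + O(1)$ by having the players guess only the identity of the unique pipe $i$ out of which water spills, plus one bit for the side, and then argue — this is the crux — that acceptance on this guess is a rectangle condition because "pipe $i$ is the spill pipe on Bob's side" is equivalent to a condition verifiable by Alice alone on her matching conjuncted with a condition verifiable by Bob alone on his matching, using the reversibility of the flow (trace from $i$ backward: the backward path from $i$ to $t$ is equally well-defined, and one shows its two halves are owned separately). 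Getting this rectangle decomposition precisely right, handling the base cases (tap unconnected $\Rightarrow$ output $0$, path length $1$), and confirming the final bound is $\log s + O(1) = N(f)$ hence $s \ge 2^{N(f) - O(1)}$ — and then noticing the theorem as stated, $GH(f) \ge N(f) - 1$, must actually come from the \emph{reverse} simpler direction after all (each matching is describable, the garden-hose matrix has a specific structure) — is where I would spend the bulk of the effort, reconciling which direction of the inequality the $\log$ vs.\ non-$\log$ bound lands on.
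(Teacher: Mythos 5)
The theorem is equivalent to $N(f)\le GH(f)+1$: from a garden-hose protocol with $s$ pipes you only need to build a non-deterministic protocol communicating about $s$ bits, not $\log s$ bits. This is exactly where your proposal goes off course. You briefly consider the right idea --- guess the unordered \emph{set} of pipes used by the water and have each player verify it locally --- but you discard it because it costs $\Theta(s)$ bits, which you judge ``too much''; in fact $\Theta(s)$ bits is precisely what the statement asks for, and that discarded idea is essentially the paper's proof. There, Alice guesses a set $S\subseteq\{1,\dots,s\}$ and sends it with $s$ bits; each player checks \emph{locally} that $S$ is consistent with his or her own matching (the tap connects into $S$, pipes of $S$ are paired up inside $S$, with one open end on the appropriate side according to the parity of $|S|$); Bob answers with one more bit. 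Correctness uses a point your sketch also misses: the players cannot check that the guessed set is exactly a path from $t$ (each sees only half the edges), so a consistent $S$ may be a proper superset of the wet pipes --- but then the surplus pipes of $S$ necessarily form cycles, which have even length, so $|S|$ has the same parity as the water path and still determines $f(x,y)$. This gives $N(f)\le s+1$, i.e.\ $GH(f)\ge N(f)-1$.

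The route you actually pursue --- a non-deterministic protocol of cost $\log s+O(1)$, resting on the claim that ``the water spills out of pipe $i$ on Bob's side'' is a rectangle condition, equivalently that the garden-hose matrix satisfies $N(G_s)\le\log s+O(1)$ --- cannot be repaired. The event depends jointly on the two matchings (as you concede when you write ``it does not factor''), and the claim is provably false: Equality on $m=\Omega(s)$ bits embeds as a submatrix of $G_s$ (since $GH(\mathrm{Equality})=O(m)$), and the two-sided non-deterministic complexity of Equality is $\Theta(m)$ (its $1$-inputs form a fooling set of size $2^m$), so $N(G_s)=\Omega(s)$. Moreover, had such a $\log s$ bound held, it would yield $GH(f)\ge 2^{\Omega(N(f))}$, which is far stronger than the theorem and false, e.g.\ for Inner Product, where $N(IP)\ge n$ but $GH(IP)=O(n)$. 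So the proposal as written ends without a proof; the fix is simply to accept the linear-cost certificate (the set of wet pipes) and supply the cycle/parity argument for soundness.
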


The main idea is that a nondeterministic protocol that simulates the garden-hose game can choose the {\em set} of pipes that are used on a path used on inputs $x,y$ instead of the path itself, reducing the complexity of the protocol. The set that is guessed may be a superset of the actually used pipes, introducing ambiguity. Nevertheless we can make sure that the additionally guessed pipes form cycles and are thus irrelevant.

As an application consider the function $IP(x,y) = \sum_{i=1}^{n} (x_i\cdot y_i)$ $mod$ $2$. It is well known that  $N(IP) \geq n+1$ \cite{KN96}, hence we get that $GH(IP)\geq n$. The same bound holds for Disjointness. These bounds improve on the previous $\Omega(n/\log n)$ bounds for these functions \cite{BFSS11}. Furthermore note that the fooling set technique gives only bounds of size $O(\log^2 n)$ for the complexity of $IP$ (see \cite{KN96}), so the technique previously used to get a linear lower bound for Equality fails for $IP$.

\subsection{$GH(f)$ At Most The Size of a Protocol Tree for $f$}

Buhrman et al.~\cite{BFSS11} show that any one way communication complexity protocol with complexity $D_1(f)$ can be converted to a garden-hose protocol with $2^{D_1(f)}+1$ pipes. One-way communication complexity can be much larger than two-way communication \cite{papadimitriou&sipser:cc}.

\begin{theorem}\label{thm:dcc} For any function $f$, the garden-hose complexity $GH(f)$ is upper bounded by the number of edges in a protocol tree for $f$.
\end{theorem}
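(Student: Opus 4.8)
The plan is to start from an optimal deterministic protocol tree $T$ for $f$ and build a garden-hose protocol whose pipes correspond to the edges of $T$. The guiding intuition is that a run of the communication protocol on input $(x,y)$ traces a root-to-leaf path in $T$; we want the water in the garden-hose game to physically trace exactly that same path (thought of as a sequence of edges), entering the subtree rooted at each visited node and leaving it once the path continues. So I would set up one pipe per edge of $T$, plus we let the tap play the role of the ``edge into the root''. Each pipe/edge will be traversed by the water exactly when the protocol's path uses that edge, so the parity of the path length in the garden-hose game equals the parity of the number of edges on the protocol path — and by padding the tree (e.g.\ adding a dummy edge below each leaf, or arguing directly about leaf depths) we can arrange that this parity equals $f(x,y)$.

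The key steps, in order: (1) Fix the protocol tree $T$; recall nodes alternate between Alice-nodes and Bob-nodes by layer. For an Alice-node $v$ with incoming edge $e_v$ (from its parent) and outgoing edges $e_{v,1},\dots,e_{v,d}$ to its children, Alice knows, given her input $x$, which child the protocol moves to from $v$ — \emph{provided} $v$ is actually reached, but crucially she can define her matching \emph{obliviously}: at node $v$, if $x$ dictates moving to child $j$, Alice adds the edge $\{e_v, e_{v,j}\}$ to her matching $E_A(x)$. She does this for every Alice-node $v$ of $T$ simultaneously. (2) Symmetrically, Bob at each Bob-node $v$, given $y$, adds the edge pairing $e_v$ with the outgoing edge to the child selected by $y$. (3) Check that $E_A(x)$ is a matching on $V\cup\{t\}$ and $E_B(y)$ a matching on $V$: each pipe $e$ is the incoming edge of exactly one node $v$ and an outgoing edge of exactly one node (its parent); it gets matched at $v$ only if $v$ is an Alice-node and at its parent only if the parent is an Alice-node — and since layers alternate, $v$ and its parent are never both Alice-nodes, so $e$ is used at most once in $E_A(x)$; likewise for $E_B(y)$. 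The tap $t$ is identified with the incoming ``edge'' of the root and is matched by whoever owns the root. (4) Trace the water: starting from $t$, the water enters the root via its incoming slot, Alice (say, if the root is hers) routes it to the edge toward the correct child $c_1$, then Bob's matching at $c_1$ routes it down to the correct grandchild, and so on; the water follows precisely the protocol path $v_0=\text{root}, v_1, v_2,\dots$ down to the leaf $\ell=\ell(x,y)$. At $\ell$ there is no outgoing edge matched, so the water stops: it has traversed exactly $\mathrm{depth}(\ell)$ pipes. (5) Finally, convert ``depth of the reached leaf'' into ``$f(x,y)$'': since each leaf is labeled accept/reject, subdivide or extend edges so that every reject-leaf is at even depth and every accept-leaf at odd depth (this costs at most one extra pipe per leaf, still $O(|E(T)|)$ pipes, and does not disturb the matching/alternation argument), giving output $=$ parity of path $=f(x,y)$.

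I expect the main obstacle to be the \emph{obliviousness} subtlety in step (1)–(3): Alice must fix her matching using only $x$, before knowing which nodes are actually on the path, so she necessarily installs pairings at \emph{every} Alice-node, including unreached ones. One must check that these extra pairings do not create spurious short-circuits on the active path — and indeed they cannot, because an unreached node $v$ has its incoming pipe $e_v$ lying in a subtree that the water never enters, so the connected component of $t$ in $G=(V\cup\{t\}, E_A(x)\cup E_B(y))$ is exactly the active path plus possibly some disjoint cycles/paths among unused pipes, which are irrelevant to the path from $t$. Making this component-structure argument clean (and confirming the matching property and the alternation bookkeeping) is the only real content; the rest is routing the water and the easy parity padding. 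The resulting size is the number of edges of $T$ (up to the additive $O(\text{\#leaves})$ from step 5, which is absorbed), proving $GH(f)\le |E(T)|$ for an optimal protocol tree $T$.
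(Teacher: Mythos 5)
Your construction is essentially the paper's proof, just unfolded: the paper builds the same protocol recursively (one pipe per protocol-tree edge, with the owner of each internal node routing the incoming pipe into the pipe of the child selected by her input), while you describe the resulting matchings globally and verify explicitly that they really are matchings and that the pairings installed at unreached nodes cannot divert the water --- a verification the paper leaves implicit in the recursion. The only place you give something away is the leaf/parity step (5): adding a dummy pipe per leaf (or padding the tree) yields $|E(T)|$ plus the number of leaves many pipes, whereas the theorem claims the bound $|E(T)|$ itself. The paper avoids this by treating leaves asymmetrically: if the child selected at a node is a leaf whose answer corresponds to the side the water is currently on, the node's owner simply does not route the water onward (it spills right there, so that tree edge costs no pipe at all); only when the answer corresponds to the opposite side does she push the water through one pipe left open at the far end. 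Thus every tree edge costs at most one pipe and the exact bound $GH(f)\le |E(T)|$ follows; with this small modification your argument gives the stated bound, while as written it proves it only up to an additive term of the number of leaves (a constant factor).
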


 The construction is better than the previous one in \cite{BFSS11} for problems for which one-way communication is far from the many-round communication complexity.

\section{Relating Permutation Branching Programs and the Garden-Hose Model}

\begin{definition} In a garden hose protocol a spilling-pipe on a player's side is a pipe such that water spills out of that pipe on the player's side during the computation for some input $x,y$.

 We say a protocol has multiple spilling-pipes if there is more than one spilling-pipe on Alice's side or on Bob's side.
\end{definition}

We now show a technical lemma that helps us compose garden-hose protocols without blowing up the size too much.

\begin{lemma}\label{one-output} A garden-hose protocol $P$ for $f$ with multiple spilling pipes can be converted to another garden-hose protocol $P'$ for $f$ that has only one spilling pipe on Alice's side and one spilling pipe on Bob's side. The size of $P'$ is at most 3 times the size of $P$ plus 1.
\end{lemma}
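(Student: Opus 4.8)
The plan is to build $P'$ by \emph{reflecting} the computation of $P$. Keep the $s=|P|$ pipes $V$ of $P$ with their matchings $E_A(x),E_B(y)$, and adjoin two fresh copies $V^{(1)},V^{(0)}$ of $V$ (write $v\mapsto v^{(1)},v^{(0)}$ for the copy maps) together with a constant number of extra pipes, among them two designated sinks $c_1,c_0$. On each copy both players place the \emph{mirror image} of their own $P$-matching, with one modification: the edge incident to the tap is rerouted, so that in $V^{(1)}$ it leads to $c_1$ and in $V^{(0)}$ it leads, through one further fixed edge of Bob's, to $c_0$ (Alice, who knows which pipe $u_1=E_A(x)(t)$ the tap is joined to, performs these reroutings). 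The crucial new edges are the cross-overs: for every pipe $v$ that Bob leaves unmatched under $y$ --- exactly the pipes at which $P$ can spill on Bob's side --- Bob adds $v\leftrightarrow v^{(1)}$; symmetrically, for every pipe $v$ that Alice leaves unmatched under $x$ --- the possible Alice-spilling pipes --- Alice adds $v\leftrightarrow v^{(0)}$. Each player can do this, since the cross-overs touch only his own previously unmatched pipes together with fresh pipes, so the additions are still matchings.

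Now follow the water. It traverses the $P$-path $t=u_0,u_1,\dots,u_\ell$ until it reaches the spilling pipe $u_\ell$. If $\ell$ is odd this is a Bob-spill ($u_\ell$ unmatched by Bob), so the cross-over edge $u_\ell\leftrightarrow u_\ell^{(1)}$ pushes the water into $V^{(1)}$; because $V^{(1)}$ carries the mirrored matchings and each interior pipe $u_1,\dots,u_{\ell-1}$ has degree $2$ in $P$ (so its copy acquires no cross-over edge), the water is \emph{forced} to retrace the reversed path $u_\ell^{(1)},u_{\ell-1}^{(1)},\dots,u_1^{(1)}$ and then exits along the rerouted tap-edge into $c_1$, spilling on Bob's side. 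The new path has length $\ell+1+(\ell-1)+1=2\ell+1$, of the same (odd) parity, so the output is preserved and $c_1$ is the unique Bob-spilling pipe. If $\ell\ge 2$ is even (an Alice-spill) the mirror-image situation occurs in $V^{(0)}$: the water retraces to $u_1^{(0)}$ and is routed via the rerouted tap-edge and one extra edge to $c_0$, making the length $2\ell+2$, so even parity is restored and $c_0$ is the unique Alice-spilling pipe. The degenerate case $\ell=0$ (Alice leaves the tap unmatched) is absorbed by letting Alice then join the tap to $c_0$ through a fixed length-$2$ detour (reusing the auxiliary pipe). Counting pipes gives $|P'|\le 3|P|+1$ with a careful choice of the handful of auxiliary pipes.

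The step I expect to be the main obstacle is proving that the retracing inside a mirror copy is genuinely forced and terminates exactly at the intended sink with the intended parity: one must verify that the cross-over edge enters the copy at the right pipe and with the right ``phase'' (so the next edge used is the mirror of the last edge of the $P$-path), that no interior copied pipe picks up a stray cross-over edge that would derail the retrace, that the two reflected copies never exchange water or feed water back into $V$, and that the rerouted tap-edges interact correctly. A related point, which is precisely what forces \emph{two} reflected copies rather than one, is that reflection is parity-symmetric --- a single copy would funnel both answers to the same sink and the same side --- so the $0$- and $1$-outputs must be peeled off into separate copies, the extra edge at $c_0$ being exactly the parity repair needed on the Alice-spilling branch.
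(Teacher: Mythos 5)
Your construction is essentially the paper's: triplicate the pipes, compute in the first block, and have each player cross-connect the pipes they leave open to the corresponding pipes of a dedicated second (resp.\ third) block so the water ``un-computes'' by retracing, with one extra pipe flipping the $1$-output to Bob's side; the retrace is forced for exactly the reason you note (interior pipes of the path are matched on both sides, so their copies carry no cross-over), and your parity counts and the $\ell=0$ case are right. The only real deviation is your two-pipe detour to a fixed Alice-side sink $c_0$, which as written gives $3s+3$ pipes rather than the stated $3s+1$ (the ``careful choice'' is not spelled out); the paper instead lets the $0$-output spill directly at the copy of the tap-pipe, an immaterial additive difference, while your variant has the mild benefit that the spilling pipes are input-independent.
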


Next we are going to show that it is possible to convert a (loose) permutation branching program into a garden-hose protocol with only a constant factor increase in size. We are stating a more general fact, namely that the inputs to the branching program we simulate can be functions (with small garden-hose complexity) instead of just variables. This allows us to use composition.

\begin{lemma}\label{pbp2gh} GH$(g(f_1, f_2,..., f_k)) = O(s\cdot\max(C_i)) + O(1)  $, where $PBP(g) = s$ and $GH(f_i) = C_i$ and $f_i : \{0, 1 \}^n \times \{0, 1 \}^n \rightarrow \{0, 1 \}$. The $f_i$ do not necessarily have the same inputs $x,y$.
\end{lemma}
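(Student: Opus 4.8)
The plan is to simulate the permutation branching program $P$ for $g$ layer by layer, using a block of garden-hose pipes for each layer, and to "plug in" the garden-hose protocol for each $f_i$ whenever the branching program queries the corresponding input. First I would set things up so that the pipes of the garden-hose protocol correspond to the nodes of the branching program: for each layer $\ell$ of $P$, which has $k$ nodes, I reserve $k$ pipes, so that "water currently in pipe $v$ of layer $\ell$" encodes "the computation of $P$ has reached node $v$ in layer $\ell$." The water starts at the tap, which Alice connects to the pipe representing the unique source node in layer $0$. The goal is that water flows from layer $\ell$ to layer $\ell+1$ exactly according to the edge of $P$ taken on the given input, and that a branching-program path reaching $\tt accept$ causes the water to spill on Bob's side (odd-length path) and a path reaching $\tt reject$ causes it to spill on Alice's side (even length).

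The key point is how a single layer is implemented. Suppose layer $\ell$ queries the function $f_i$, and let $C_i = GH(f_i)$; by Lemma~\ref{one-output} we may assume the garden-hose protocol for $f_i$ has a single spilling pipe on each side, at the cost of a factor $3$ and an additive constant. The idea is that the permutation structure is exactly what lets us route $k$ independent "tokens" through one copy of the $f_i$-gadget. Because the $0$-edges between layers $\ell$ and $\ell+1$ form an injective (partial) map and so do the $1$-edges, for each node $v$ in layer $\ell$ there is a node $\sigma_0(v)$ it goes to when $f_i=0$ and a node $\sigma_1(v)$ when $f_i=1$ (or directly to a sink). I would take $k$ copies of the $f_i$-protocol running on the same inputs $x,y$; the token in pipe $v$ enters the $v$-th copy, and by routing the "$f_i=0$ spill" of the $v$-th copy into pipe $\sigma_0(v)$ of layer $\ell+1$ and the "$f_i=1$ spill" into pipe $\sigma_1(v)$, the water emerges at exactly the right node of the next layer. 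Injectivity of $\sigma_0$ and of $\sigma_1$ guarantees these connections form a legal matching (no pipe of layer $\ell+1$ receives two incoming hoses). An edge of $P$ that goes directly to $\tt accept$ or $\tt reject$ is handled by routing the corresponding spill of that copy to a fixed "accept pipe" (on Bob's side) or "reject pipe" (on Alice's side) instead of into the next layer; one also has to make the parity of the path to these sink pipes come out right, which is a routine matter of inserting at most one extra hop. Since each layer costs $O(k\cdot C_i) + O(1)$ pipes and there are $l = O(s/k)$ layers (with $s = PBP(g) = \Theta(k l)$), the total is $\sum_\ell O(k C_i) + O(1) = O(s \cdot \max_i C_i) + O(1)$, as claimed; one last application of Lemma~\ref{one-output} cleans up any multiple spilling pipes created along the way.

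The main obstacle I expect is bookkeeping the parities and the sink routing: the garden-hose output convention is the parity of the path length, and each time water passes through an $f_i$-gadget or is rerouted between layers it picks up some length, so I need to make sure that "reaches $\tt accept$" always corresponds to odd total length and "reaches $\tt reject$" to even total length, uniformly across all layers and all copies. The clean way to handle this is to observe that the $f_i$-gadget (in the one-spilling-pipe-per-side normal form) can be taken to have a definite parity behaviour — entering at its tap it spills on Alice's side iff $f_i(x,y)=0$ — and then to think of each layer not as affecting parity but as relabelling which pipe the water is in; the only place parity genuinely matters is at the very end, where I attach the global accept-pipe on Bob's side and the global reject-pipe on Alice's side and insert a single normalizing hose if needed. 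A secondary issue is that the $f_i$ may depend on different parts of $(x,y)$ than the variable that layer $\ell$ of $P$ nominally queries — but this is exactly why the lemma is stated for $g(f_1,\dots,f_k)$ with possibly different inputs, and it causes no trouble because Alice's and Bob's connections in copy $v$ of layer $\ell$ depend only on their own inputs, just as required by the definition of a garden-hose protocol.
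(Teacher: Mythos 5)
There is a genuine gap, and it sits exactly at the crux of the lemma. Your construction keeps a \emph{single} copy of the $f_i$-gadget per node of a layer and routes the $0$-spill of copy $v$ to node $\sigma_0(v)$ and the $1$-spill to node $\sigma_1(v)$ of the next layer. But in a permutation branching program the injectivity is only guaranteed \emph{separately} for the $0$-edges and for the $1$-edges: a node $w$ of layer $\ell+1$ may well have one incoming $0$-edge (from $v$) \emph{and} one incoming $1$-edge (from $v'$). Both of these would have to feed into the single entry point (the tap) of the gadget for $w$ in layer $\ell+1$, and this is impossible in the garden-hose model: a pipe end may be connected to at most one other end, and no Y-shaped merging of flows is allowed. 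If instead you attach the two incoming hoses to the two ends of an intermediate ``node pipe'' $w$, that pipe is then fully occupied, so the water arriving from $v$'s $0$-spill shoots straight through into $v'$'s gadget \emph{backwards} through its $1$-spill, and since on this input that gadget's forward path ends at its $0$-spill, the backward flow exits at an uncontrolled open end. So ``no pipe of layer $\ell+1$ receives two incoming hoses'' is not guaranteed by injectivity of $\sigma_0$ and $\sigma_1$ alone, and the layer-to-layer routing as you describe it does not work. This in-degree-$2$ problem is precisely why the simulation is easy for decision trees but not for branching programs.

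The missing idea (which is the heart of the paper's proof) is an \emph{un-computation} step: use $2W$ copies of the $f_i$-protocol per layer, an ``upper'' and a ``lower'' copy for each node. The upper copy of $v$ computes $f_i$; its $1$-spill pipe is connected to the $1$-spill pipe of the \emph{lower} copy of the destination node along the $1$-edge, and its $0$-spill to the $0$-spill of the lower copy of the destination along the $0$-edge. Because the lower copy runs the same protocol on the same input, the water entering it at the spill corresponding to the actual value of $f_i$ traces the path backwards and emerges at that copy's tap --- a single canonical exit per node, independent of whether the node was reached by a $0$-edge or a $1$-edge --- which is then connected to the tap of the upper copy of the same node in the next layer. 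The permutation property is what makes these spill-to-spill connections a legal matching (each lower copy's $0$-spill and $1$-spill is claimed by at most one predecessor). Your normalization via Lemma~\ref{one-output}, the treatment of accept/reject edges, the remark about different inputs for the $f_i$, and the size accounting are all fine, but without the compute/un-compute pairing the construction does not yield a valid garden-hose protocol.
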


A first corollary is the following fact already shown in \cite{BFSS11}.
Nonuniform Logspace is equal to the class of all languages recognizable by polynomial size families of branching programs.
Since reversible Logspace equals deterministic Logspace \cite{KMT00}, and a reversible Logspace machine (on a fixed input length) can be  transformed into a polynomial size permutation branching program, we get the following.

\begin{corollary} \label{ncgh} Logspace $\subseteq GH(poly(n))$. This holds for any partition of the variables among Alice and Bob.\end{corollary}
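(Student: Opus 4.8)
The plan is to chain together three facts: the characterization of nonuniform Logspace by polynomial-size branching programs, the collapse of reversible and deterministic Logspace, and Lemma~\ref{pbp2gh}. Fix a language $L$ decidable in (nonuniform) Logspace and an input length $n$. By \cite{KMT00}, deterministic Logspace equals reversible Logspace, so the length-$n$ restriction of $L$ is decided by a reversible Logspace Turing machine $M$.

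Next I would unfold $M$ into a branching program in the standard way: the nodes correspond to configurations of $M$ on length-$n$ inputs (work-tape contents, head positions, internal state), of which there are only polynomially many since the work tape has $O(\log n)$ cells; layer $t$ consists of the configurations reachable after $t$ steps, each such node queries the single input bit that $M$'s input head currently scans, and the two outgoing edges point to the two successor configurations (or to $\tt accept$/$\tt reject$ once $M$ halts). Because $M$ is reversible, for each fixed value of the queried bit the transition between consecutive layers is injective, so this is a (loose, because of the halting edges) permutation branching program of size $s = \mathrm{poly}(n)$. Padding all layers to a common width and the computation to a fixed length changes $s$ only by a polynomial factor.

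Finally I would apply Lemma~\ref{pbp2gh} with $g$ the function computed by this permutation branching program and with each $f_i$ a single input bit --- either one of Alice's variables or one of Bob's, according to whatever the fixed partition is. A projection to a single variable has $GH(f_i)=O(1)$ (Alice, resp.\ Bob, simply connects or fails to connect the tap or the first pipe depending on her, resp.\ his, bit), and Lemma~\ref{pbp2gh} does not require the $f_i$ to share the same input. Hence $GH(L\!\restriction_n) = O(s\cdot\max_i GH(f_i)) + O(1) = O(s) + O(1) = \mathrm{poly}(n)$, and since the partition of the variables was arbitrary, the bound holds for every partition.

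The only genuinely delicate point is the middle step: checking that the configuration-graph unfolding of a reversible Logspace machine really yields a \emph{permutation} branching program, i.e.\ that reversibility of $M$ as a Turing machine translates into injectivity of each layer-to-layer transition once the queried bit is fixed. This is essentially the content of the standard simulation, but one must be careful with the bookkeeping --- ensuring that each layer queries exactly one variable (splitting a single reversible step that consults both the work tape and the input into the one-variable-per-layer format) and that the edges to $\tt accept$/$\tt reject$ from internal layers do not destroy injectivity --- which is precisely why Lemma~\ref{pbp2gh} was stated for loose permutation branching programs.
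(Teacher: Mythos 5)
Your proof is correct and follows essentially the same route as the paper: reversible Logspace equals deterministic Logspace \cite{KMT00}, the reversible machine unfolds into a polynomial-size (loose) permutation branching program, and Lemma~\ref{pbp2gh} applied with single-variable $f_i$'s (each of constant garden-hose complexity, regardless of which player holds it) gives the polynomial garden-hose bound for any partition. The extra care you take about one-variable-per-layer bookkeeping and the role of loose programs is exactly the detail the paper leaves implicit.
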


\section{The Distributed Majority Function}

In this section we investigate the complexity of the Distributed Majority function.

\begin{definition} Distributed Majority: DMAJ$(x,y)=1$ iff $\sum_i^n(x_i \cdot y_i) \geq \frac{n}{2}$, where $x,y\in\{0,1\}^n$.
\end{definition}

Buhrman et al.~\cite{BFSS11} have conjectured that the complexity of this function is quadratic, which is what is suggested by the na\"{i}ve garden-hose protocol for the problem. The na\"{i}ve protocol implicitly keeps one counter for $i$ and one for the sum, leading to quadratic size.
Here we describe a construction of a permutation branching program of size $O(n \cdot \log^{3} n)$ for Majority, which can then be used to construct a garden-hose protocol for the Distributed Majority function. The Majority function is defined by $MAJ(x_1,\ldots, x_1)=1\Leftrightarrow \sum x_i\geq n/2$.

Note that the Majority function itself can be computed in the garden-hose model using $O(n)$ pipes (for any way to distribute inputs to Alice and Bob), since Alice can just communicate $\sum_i x_i$ to Bob.
The advantage of using a permutation branching program to compute Majority is that by Lemma \ref{pbp2gh} we can then find a garden-hose protocol for the composition of MAJ and the Boolean AND, which is the Distributed Majority function. We adapt a construction of Sinha and Thathachar \cite{ST97}, who describe a branching program for the Majority function.

\begin{lemma} \label{pbpmaj} $PBP(MAJ) = O(n\cdot \log^{3} n)$.  
\end{lemma}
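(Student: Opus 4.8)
The naive branching program for $\mathrm{MAJ}$ keeps a single running counter for $S=\sum_{i=1}^{n}x_i$, so it has width $n+1$ and length $n$, hence size $\Theta(n^2)$ --- this is exactly the source of the conjectured quadratic lower bound. The plan, following the construction of Sinha and Thathachar~\cite{ST97}, is to replace this one wide counter by the Chinese Remainder representation of $S$: fix the first $k=O(\log n/\log\log n)$ primes $p_1<\cdots<p_k$, so that $\prod_{j}p_j>n$ and each $p_j=O(\log n)$. Then $S\in\{0,\dots,n\}$, and hence $\mathrm{MAJ}(x)$, is a fixed Boolean function $g$ of the residues $r_j=S\bmod p_j$, and the tuple $(r_1,\dots,r_k)$ occupies only $\sum_{j}\lceil\log p_j\rceil=O(\log n)$ bits. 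The remaining work is to build small permutation branching programs for (a) each residue $r_j$ and (b) the combiner $g$, and then to compose them reversibly.

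For (a): a layered program with $p_j$ nodes per layer and one layer per input bit, in which the bit value $0$ applies the identity permutation on $\mathbf{Z}_{p_j}$ and the bit value $1$ applies the cyclic shift $r\mapsto r+1$, is a strict permutation branching program of width $p_j=O(\log n)$ and length $n$ whose final state is $r_j$; selecting one bit of $r_j$ merely relabels its last layer. For (b): recovering $S$ from the residues (an iterated addition $\hat S=\sum_{j}c_jr_j$ of $k$ numbers of $O(\log n)$ bits, for suitable fixed constants $c_j$) followed by a few threshold comparisons is an $\mathrm{NC}^1$ task of depth $O(\log k+\log\log n)=O(\log\log n)$, so by Barrington's theorem (with the careful accounting of~\cite{ST97}) $g$ has a constant-width permutation branching program of polylogarithmic length over the residue bits. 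Finally one composes: whenever the combiner queries a bit of $r_j$, splice in a forward run of the mod-$p_j$ counter, one deterministic layer that applies the $S_5$-permutation prescribed for that bit value, and a backward run of the counter (the reverse of a permutation branching program is again one) that uncomputes its state. Since every auxiliary counter is reset immediately after use, its states are \emph{reused} rather than \emph{accumulated}, so the width stays $O(\log n)$; the length is (length of the combiner)$\,\times\,O(n)$, and the accounting of~\cite{ST97} brings the total size to $O(n\log^{3}n)$. The result is a (loose) permutation branching program for $\mathrm{MAJ}$.

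I expect the main obstacle to lie in two places. First, the composition must be carried out so as to stay strictly inside the permutation-branching-program formalism: each auxiliary counter must be genuinely uncomputed, which is precisely where reversibility is used and is what prevents the width from blowing up; formally this rests on the (routine, but worth stating) closure of permutation branching programs under substituting a sub-program for a queried bit. Second, one must genuinely exploit the iterated-addition structure of $g$ to obtain depth $O(\log\log n)$ --- treating $g$ as an arbitrary function of its $O(\log n)$ input bits would give only depth $\Theta(\log n)$ and a useless $\mathrm{poly}(n)$ bound out of Barrington --- and then squeeze the combiner down far enough that the final polylogarithmic factor is exactly $\log^{3}n$; this optimization is the technical core of the Sinha--Thathachar construction. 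Verifying that the mod-$p_j$ counters and Barrington's gadget are themselves permutation branching programs is straightforward.
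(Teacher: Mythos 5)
Your overall strategy --- maintain CRT residues of $\sum x_i$ with mod-$p_j$ counters, decide majority by a small ``combiner'' over the residue bits, and splice each residue-bit query of the combiner between a forward run and a reversed (uncomputing) run of the corresponding counter --- is sound as a way to get a permutation branching program for MAJ, and the reversibility bookkeeping you flag (controlled permutations on the product of the counter register and the width-5 register, counters reset after each use, width $O(\log n)$) does go through. However, it is a genuinely different construction from the one the lemma is based on, and as written it does not deliver the stated bound $O(n\log^3 n)$. Your size is (length of combiner) $\times\,O(n)\,\times$ width $O(\log n)$, so you need the combiner to have length $O(\log^2 n)$. Barrington gives length $4^d$ for a depth-$d$ fan-in-2 circuit, so you would need $d\le\log_2\log n+O(1)$; but iterated addition of $\Theta(\log n/\log\log n)$ numbers of $\Theta(\log n)$ bits, followed by reduction mod $P$ and a threshold comparison, has fan-in-2 depth $c\log\log n$ with $c$ comfortably larger than $1$, which yields a polylogarithmic factor of degree $2c+1>3$. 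Moreover, the appeal to ``the careful accounting of \cite{ST97}'' cannot close this gap, because Sinha and Thathachar's construction is not a Barrington-style combiner at all: it is an iterated approximate-division scheme, so its accounting simply does not apply to your combiner. So your argument proves $PBP(MAJ)=O(n\cdot\mathrm{polylog}\,n)$ (which still suffices to refute the quadratic conjecture for Distributed Majority, with a slightly worse polylog), but not the $\log^3 n$ exponent claimed in the lemma.

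For contrast, the paper follows \cite{ST97} directly: it joins $k\le\log M$ modulus-$r_i$ boxes (prime moduli of size $\Theta(\log M)$) into an ``approximate divider'' whose sinks localize $|x|$ into overlapping intervals, so that the undecided sinks confine $|x|$ to an interval of at most half the original length; iterating this $O(\log n)$ times decides majority. Each iteration has length $O(n\log n)$ and width $O(\log n)$, giving $O(n\log^3 n)$ in total, and it is made into a permutation branching program not by Barrington but by appending reversed copies of each building block to merge its accepting/rejecting/undecided sinks reversibly (the same compute/uncompute idea you use, applied per iteration rather than per residue-bit query). If you want to salvage your route to the exact exponent, you would have to replace the generic Barrington simulation of the combiner by a hand-built permutation program of length $O(\log^2 n)$ over the residue bits, which is essentially where you would be forced back into the interval-narrowing structure of \cite{ST97}.
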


We can now state our result about the composition of functions $f_1,\ldots, f_k$ with small garden-hose complexity via a Majority function.

\begin{lemma} \label{ghmaj} For $(f_1, f_2, .., f_k)$, where each function $f_i$ has garden-hose complexity  $GH(f_i)$, we have $GH(MAJ( f_1,\ldots, f_k)) = O(\sum GH(f_i))\cdot\log^3 k)$.
\end{lemma}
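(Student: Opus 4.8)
The plan is to combine Lemma~\ref{pbpmaj} with a more careful accounting of the construction behind Lemma~\ref{pbp2gh}. Recall that the simulation proving Lemma~\ref{pbp2gh} works layer by layer: a layer of the permutation branching program of width $w$ that queries the function $f_i$ is realized by a garden-hose gadget built from $O(w)$ parallel copies of an optimal garden-hose protocol for $f_i$, each first normalized via Lemma~\ref{one-output} to have a single spilling pipe on each side. Such a gadget uses $O(w\cdot GH(f_i))$ pipes, these pipe sets are disjoint across layers, and only $O(1)$ further pipes are needed for the tap and the two global spilling pipes. Hence the simulation actually gives
\[
 GH\big(g(f_1,\dots,f_k)\big) \;=\; O\!\Big(\sum_{\ell} w\cdot GH\big(f_{i(\ell)}\big)\Big) + O(1) \;=\; O\!\Big(\sum_{i=1}^{k} n_i\cdot GH(f_i)\Big) + O(1),
\]
where $\ell$ ranges over layers, $i(\ell)$ is the variable queried in layer $\ell$, $w$ is the width, and $n_i$ is the number of nodes lying in layers that query $x_i$, so $\sum_i n_i = PBP(g)$. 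Bounding each $n_i$ by $PBP(g)$ and each $GH(f_i)$ by $\max_j GH(f_j)$ recovers the form stated in Lemma~\ref{pbp2gh}; here we keep the finer version.

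Now I would apply this with $g=MAJ$ on $k$ inputs, using the permutation branching program of Lemma~\ref{pbpmaj}. Since $MAJ$ is symmetric, that program may be taken to treat all $k$ inputs symmetrically: every variable is queried in the same number $q$ of layers, all of width $w$, so the program has $kq$ layers and size $\Theta(kqw)$, which by Lemma~\ref{pbpmaj} is $O(k\log^3 k)$; thus $qw = O(\log^3 k)$ and $n_i = qw = O(\log^3 k)$ for every $i$. (If the construction in the appendix is not literally symmetric, one reads the same bound $n_i=O(\log^3 k)$ directly off its description, or symmetrizes it.) Substituting into the displayed bound,
\[
 GH\big(MAJ(f_1,\dots,f_k)\big) \;=\; O\!\Big(\sum_{i=1}^k n_i\cdot GH(f_i)\Big) + O(1) \;=\; O\!\Big(\log^3 k\cdot \sum_{i=1}^k GH(f_i)\Big),
\]
the additive $O(1)$ being lower order (and negligible once any $f_i$ is non-constant). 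This is the claimed bound.

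The main obstacle is precisely the claim that each input variable of the Majority branching program of Lemma~\ref{pbpmaj} occurs in only $O(\log^3 k)$ nodes; this is what replaces the weak $k\cdot\max_i GH(f_i)$ one would get from a black-box use of Lemma~\ref{pbp2gh} with $PBP(MAJ)=O(k\log^3 k)$ by the desired $\sum_i GH(f_i)$. The cleanest justification is that the construction adapted from Sinha--Thathachar~\cite{ST97} is invariant under permuting the input coordinates, so the per-variable node count equals $PBP(MAJ)/k=O(\log^3 k)$. The remaining ingredients --- the layerwise structure of the simulation and the single-spilling-pipe normalization of Lemma~\ref{one-output} --- are already available, so beyond this bookkeeping point the proof is routine.
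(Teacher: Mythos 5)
Your proposal is correct and follows the paper's route (combine the permutation branching program for Majority from Lemma~\ref{pbpmaj} with the PBP-to-garden-hose simulation of Lemma~\ref{pbp2gh}); in fact your finer, layer-wise accounting is exactly the refinement needed to obtain the stated $\bigl(\sum_i GH(f_i)\bigr)\cdot\log^3 k$ form, since the black-box statement of Lemma~\ref{pbp2gh} only yields $O\bigl(k\log^3 k\cdot\max_i GH(f_i)\bigr)$, while the construction behind it really gives size $O\bigl(W\sum_{\ell} GH(f_{i(\ell)})\bigr)$ summed over layers, and each variable of the Majority program is queried in $O(\log^2 k)$ layers of width $O(\log k)$. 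One small caveat: the "symmetrize it" fallback is not actually available (one cannot average branching programs over permutations of the coordinates), but this does not matter because the per-variable count $O(\log^3 k)$ can be read directly off the Sinha--Thathachar-based construction --- each mod-$r$ box queries every variable exactly once, there are $O(\log k)$ boxes per iteration, $O(\log k)$ iterations, and width $O(\log k)$ --- which is the justification you correctly identify as sufficient.
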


The lemma immediately follows from combining Lemma \ref{pbpmaj} with Lemma \ref{pbp2gh}. Considering $f_i=x_i\wedge y_i$ we get

\begin{corollary}
The garden-hose complexity of distributed Majority is $O(n\log^3 n)$.
\end{corollary}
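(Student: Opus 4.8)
The plan is to derive the corollary directly from Lemma~\ref{ghmaj} by instantiating the functions $f_i$ appropriately. Take $f_i(x,y) = x_i \wedge y_i$, where the $i$-th coordinate of Alice's input and the $i$-th coordinate of Bob's input are fed to $f_i$ (so the $f_i$ genuinely have different inputs, which Lemma~\ref{pbp2gh}, and hence Lemma~\ref{ghmaj}, explicitly permits). By definition, $\mathrm{DMAJ}(x,y) = 1$ iff $\sum_{i=1}^n (x_i \cdot y_i) \geq n/2$, which is exactly $\mathrm{MAJ}(f_1,\dots,f_n)$ on these inputs.

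Next I would bound $GH(f_i)$. Each $f_i$ is a two-bit AND function $x_i \wedge y_i$, which is computable by a garden-hose protocol of constant size: Alice connects the tap to the single pipe if $x_i = 1$ (and leaves it open otherwise), and Bob connects that pipe to a second pipe ending openly on his side if $y_i = 1$ (and leaves it open otherwise); the water spills on Bob's side exactly when $x_i = y_i = 1$. Hence $GH(f_i) = O(1)$ for every $i$, so $\sum_{i=1}^n GH(f_i) = O(n)$.

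Plugging $k = n$ and $\sum GH(f_i) = O(n)$ into Lemma~\ref{ghmaj} yields $GH(\mathrm{MAJ}(f_1,\dots,f_n)) = O(n) \cdot \log^3 n = O(n \log^3 n)$, which is the claimed bound on $GH(\mathrm{DMAJ})$. There is essentially no obstacle here: the corollary is a routine specialization, and all the real work — constructing the size-$O(n\log^3 n)$ permutation branching program for $\mathrm{MAJ}$ (Lemma~\ref{pbpmaj}) and converting a permutation branching program over function-inputs into a garden-hose protocol (Lemma~\ref{pbp2gh}, used to prove Lemma~\ref{ghmaj}) — has already been done. The only thing to double-check is that the input partition used for the $f_i$ is consistent with the partition of $(x,y)$ for $\mathrm{DMAJ}$, which it is by construction.
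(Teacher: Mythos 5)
Your proof is correct and follows the paper's own route exactly: the corollary is obtained by instantiating Lemma~\ref{ghmaj} with $f_i = x_i \wedge y_i$, using that each two-bit AND has constant garden-hose complexity. Your explicit constant-size protocol for $x_i\wedge y_i$ and the check on the input partition are fine additions but change nothing essential.
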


\section{Composition and Connection to Formula Size}

We wish to relate $GH(f)$ to the formula size of $f$. To do so we examine composition of garden-hose protocols by popular gate functions.

\begin{theorem} \label{our-thm} For $(f_1, f_2, .., f_k)$, where each function $f_i$ has garden-hose complexity $GH(f_i)$
\begin{itemize}
 \item $GH(\bigvee f_i) = O(\sum GH(f_i))$.
 \item $GH(\bigwedge f_i) = O(\sum GH(f_i))$.
 \item $GH(\oplus f_i) = O(\sum GH(f_i))$.
 \item $GH(MAJ( f_i)) = O(\sum GH(f_i)\cdot \log^3 k)$.
\end{itemize}
\end{theorem}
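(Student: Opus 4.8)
\emph{Proof idea.} The fourth bullet is exactly Lemma~\ref{ghmaj} (equivalently, Lemma~\ref{pbpmaj} fed into Lemma~\ref{pbp2gh}), so only the $\bigvee$, $\bigwedge$ and $\oplus$ bullets need an argument. In all three cases I would first discard constant $f_i$ (a constant input only flips the composite answer or makes it trivial) and then, by Lemma~\ref{one-output}, assume each $P_{f_i}$ is in one-spill form: a single spilling pipe $a_i$ on Alice's side, reached exactly when $f_i(x,y)=0$, and a single spilling pipe $b_i$ on Bob's side, reached exactly when $f_i(x,y)=1$; this costs a factor $3$ and an additive $1$ per sub-protocol. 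Since every surviving $f_i$ has $GH(f_i)\ge 1$, we have $k=O(\sum_i GH(f_i))$, so additive terms of the form $O(k)$ (and the per-protocol $+1$'s) get absorbed into the target bound.

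For $\bigvee f_i$ I would simply chain the sub-protocols. Put the $P_{f_i}$ on pairwise disjoint vertex sets $V_1,\dots,V_k$, keep $P_{f_1}$'s tap as the single global tap, and delete the taps $t_2,\dots,t_k$. For each $i<k$, reroute Alice's spilling pipe $a_i$ so that it plays the role of $t_{i+1}$: Alice joins the (still unmatched) Alice-side of $a_i$ to the vertex that $t_{i+1}$ would have been matched to on her input $x$ (and, if $t_{i+1}$ would have been left unmatched, to the first later sub-protocol whose tap is matched, leaving $a_i$ unmatched if there is none). The path from the global tap runs through $P_{f_1}$; if $f_1=1$ it spills at $b_1$ on Bob's side and we output $1$; otherwise it reaches $a_1$, is carried into $P_{f_2}$, and so on, until some $f_i=1$ (spill on Bob's side) or the path spills at $a_k$ on Alice's side (all $f_i=0$). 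The parity bookkeeping is automatic: a spill on Alice's side happens only at an even path-length, so each rerouting edge occupies an odd position in the path and is legitimately one of Alice's edges, and the parity contributions of successive sub-protocols simply add. The total number of pipes is $\sum_i|V_i|\le\sum_i(3\,GH(f_i)+1)=O(\sum_i GH(f_i))$. The point is that only the ``$f_i=0$'' branch ever moves forward, so there is nothing to duplicate.

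The $\bigwedge$ bullet then follows by De Morgan, $\bigwedge_i f_i=\neg\bigvee_i\neg f_i$, once one notes $GH(\neg f)=O(GH(f))$: from the one-spill form of $P_f$, attach a fresh pipe $q$ via an Alice-edge joining $a$ and $q$ (so the water that used to spill on Alice's side now runs through $q$ and spills on Bob's side) and a fresh pipe $r$ via a Bob-edge joining $b$ and $r$ (flipping the other outcome); both added edges sit at the parity positions the rules require. Hence $GH(\bigwedge_i f_i)=O(\sum_i GH(\neg f_i))=O(\sum_i GH(f_i))$. (Alternatively one may chain directly, now propagating only the ``$f_i=1$'' branch, at the cost of one extra pipe per stage.)

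The $\oplus$ bullet is where the real work is, and I expect it to be the main obstacle. Naive chaining fails here: since neither value of a single $f_i$ determines $\oplus_j f_j$, \emph{both} branches must continue, and re-entering a sub-protocol on the ``$f_i=1$'' branch calls for a negated copy; propagated forward this doubles the number of live branches at every stage -- an exponential blow-up -- and the no-Y-junction rule forbids merging them back. The fix is to route through a \emph{permutation} branching program of width $2$ for parity: there are only two wires between consecutive layers and each transition is a bijection, so the two branches get permuted rather than multiplied. Concretely, $\mathrm{XOR}_k$ is computed by a (strict) width-$2$ permutation branching program of length $k$ that reads $f_i$ at layer $i$; feeding this into the construction behind Lemma~\ref{pbp2gh} spends only $O(\mathrm{width})=O(1)$ copies of $P_{f_i}$ at layer $i$, so that the black-box bound $O(s\cdot\max_i C_i)$ sharpens, in this read-once bounded-width regime, to $O(\sum_{\ell}\mathrm{width}(\ell)\cdot C_{i(\ell)})=O(\sum_i GH(f_i))$, the sum ranging over layers $\ell$. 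The same observation, applied to the loose width-$2$ permutation branching programs for $\mathrm{OR}_k$ and $\mathrm{AND}_k$ recorded in the preliminaries, gives a second, uniform proof of the first two bullets; the one thing to verify carefully is precisely this refinement of Lemma~\ref{pbp2gh}, since its stated form yields only $O(k\cdot\max_i GH(f_i))$, which is not additive when the $GH(f_i)$ differ widely.
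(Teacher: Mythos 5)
Your proposal is correct, and for the $\oplus$ and MAJ bullets it is essentially the paper's route: the paper proves all four items in one line, by feeding the trivial loose width-2 permutation branching programs for AND, OR, XOR (and the $O(k\log^3 k)$ permutation branching program for Majority from Lemma~\ref{pbpmaj}) into Lemma~\ref{pbp2gh}. Where you differ is in the first two bullets, which you also prove by hand: a direct chaining of one-spill-form sub-protocols for $\bigvee$, and De Morgan plus a two-pipe negation gadget for $\bigwedge$. Both of these gadget arguments are sound (the parity/alternation bookkeeping you give is exactly right, and Lemma~\ref{one-output} guarantees the spill pipes you reroute are unmatched), and they buy a self-contained, branching-program-free proof with explicit constants; the paper's route buys uniformity, since one construction handles all gate types including Majority. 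Your final caveat is well taken and is in fact the one point the paper's one-line derivation glosses over: the \emph{stated} bound of Lemma~\ref{pbp2gh}, $O(s\cdot\max_i C_i)$, only yields $O(k\cdot\max_i GH(f_i))$ for a length-$k$ width-$2$ program, which is weaker than the claimed $O(\sum_i GH(f_i))$ when the $GH(f_i)$ are unbalanced. The refinement you ask for is available but only inside the paper's proof of Lemma~\ref{pbp2gh}: its last line bounds the size by $2W\cdot\sum_{\ell} C_{i(\ell)}$, the sum running over layers, which for a read-once width-$O(1)$ program is exactly the additive bound $O(\sum_i GH(f_i))$ you need (and for the Majority program, of width $O(\log k)$ reading each input $O(\log^2 k)$ times, gives the $O(\sum_i GH(f_i)\cdot\log^3 k)$ bound of Lemma~\ref{ghmaj}). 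So nothing in your argument needs repair; you have simply made explicit a sharpening that the paper uses implicitly.
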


This result follows from Lemma \ref{ghmaj} and Lemma \ref{pbp2gh} combined with the trivial loose permutation branching programs for AND, OR, XOR.

We now turn to the simulation of Boolean formulae by garden-hose protocols. We use the simulation of formulae over the set of all fan-in 2 function by branching programs due to Giel \cite{giel01}.

\begin{theorem}\label{thm:circ} Let $F$ be a formula for a Boolean function $g$ on $k$ inputs made of gates $\{\wedge, \vee,  \oplus \}$ of arbitrary fan-in. If $F$ has size $s$ and $GH(f_i)\leq c$ for all $i$, then for all constants $\epsilon>0$ we have $GH(g(f_1, f_2, .., f_k)) \leq O(s^{1+\epsilon}\cdot c)$.
\end{theorem}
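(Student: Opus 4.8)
The plan is to combine Giel's efficient simulation of fan-in-2 formulae by branching programs with Lemma~\ref{pbp2gh}, after first reducing to the case where every gate in $F$ has fan-in $2$. First I would replace each gate of arbitrary fan-in in $F$ by a balanced binary tree of fan-in-$2$ gates of the same type: an $\wedge$-gate of fan-in $m$ becomes a depth-$\lceil\log m\rceil$ tree of binary $\wedge$-gates, and similarly for $\vee$ and $\oplus$. This transformation increases the number of leaves by at most a constant factor (each original leaf is still a leaf, and the fan-in-2 gates only add internal nodes), so the new formula $F'$ over $\{\wedge,\vee,\oplus\}$ with fan-in $2$ still has size $O(s)$ and computes the same function $g$.

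Next I would invoke Giel's theorem~\cite{giel01}: a Boolean formula of size $s'$ over all fan-in-$2$ gates can be simulated by a (layered) branching program of size $s'^{1+\epsilon}$, for any constant $\epsilon>0$ (with the width and length depending on $\epsilon$). The key point I need to check is that the branching program produced is a \emph{permutation} branching program, or can be made loose-permutation at constant-factor cost. Giel's construction is based on recursively balancing the formula and composing small width-bounded branching programs; the standard Barrington/Ben-Or--Cleve style building blocks used there are permutation (or reversible) branching programs, and the composition of permutation branching programs is again a permutation branching program. For the gates $\wedge$ and $\vee$ one uses the trivial \emph{loose} width-$2$ permutation branching programs noted after the definition of branching programs, and $\oplus$ has a trivial width-$2$ strict permutation branching program; balancing and composing these as in Giel preserves the (loose) permutation property. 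Thus I obtain a loose permutation branching program for $g$ of size $PBP(g)=O(s^{1+\epsilon})$ (absorbing the constant blow-up from the fan-in reduction into the constant $\epsilon$ by starting from a slightly smaller $\epsilon'$).

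Finally I would apply Lemma~\ref{pbp2gh} with this branching program: since $PBP(g)=O(s^{1+\epsilon})$ and $GH(f_i)\le c$ for all $i$, we get
\[
GH(g(f_1,\dots,f_k)) = O\!\paren{PBP(g)\cdot \max_i GH(f_i)} + O(1) = O(s^{1+\epsilon}\cdot c),
\]
as claimed. I expect the main obstacle to be the second step: verifying that Giel's branching-program simulation can be taken to output a permutation branching program (rather than a general branching program). If Giel's construction as stated is not literally permutation, the fix is to note that it is built from reversible/permutation gadgets and that layered branching programs arising from such balanced compositions are loose permutation branching programs, which is exactly the class Lemma~\ref{pbp2gh} handles; alternatively, one can pass through the equivalence between bounded-width branching programs and formulae (Barrington) to ensure the permutation structure. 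The fan-in reduction and the final application of Lemma~\ref{pbp2gh} are routine.
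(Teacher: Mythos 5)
Your proposal is correct and follows essentially the same route as the paper: reduce the arbitrary fan-in gates to fan-in 2 (which does not increase the number of leaves), apply Giel's simulation, and feed the resulting branching program into Lemma~\ref{pbp2gh}. The one point you flag as uncertain --- whether Giel's branching program can be taken to be a permutation branching program --- is exactly what the paper settles by inspection of Giel's construction, which in fact yields a \emph{strict} permutation branching program, so your argument goes through as planned.
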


\begin{proof}
Giel \cite{giel01} shows the following simulation result:

\begin{fact}
Let $\epsilon>0$ be any constant. Assume there is a formula with arbitrary fan-in 2 gates and size $s$ for a Boolean function $f$. Then there is a layered branching program of size $O(s^{1+\epsilon})$ and width $O(1)$ that also computes $f$.
\end{fact}

By inspection of the proof it becomes clear that the constructed branching program is in fact a strict permutation branching program.
The theorem follows by applying Lemma~\ref{pbp2gh}.
\end{proof}

\begin{corollary} When the $f_i$'s are single variables $GH(g) \leq O(s^{1+\epsilon})$ for all constants  $\epsilon>0$. Thus any lower bound on the garden-hose complexity of a function $g$ yields a slightly smaller lower bound on formula-size (all gates of fan-in 2 allowed).\end{corollary}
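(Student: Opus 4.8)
The plan is to read the corollary off the \emph{proof} of Theorem~\ref{thm:circ}, specializing each $f_i$ to be an input variable, which has constant garden-hose complexity. The first point is that Giel's Fact, as quoted there, is already stated for arbitrary fan-in-2 gates (all $16$ functions $\{0,1\}^2\to\{0,1\}$), so the broader gate basis of the corollary needs no extra work: a size-$s$ formula for $g$ over this basis gives a layered branching program of size $O(s^{1+\epsilon})$ and width $O(1)$ for $g$, which -- as recorded in the proof of Theorem~\ref{thm:circ} -- may be taken to be a strict permutation branching program, so $PBP(g)=O(s^{1+\epsilon})$.

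Next I would use that a single literal has garden-hose complexity $O(1)$: if the variable is Alice's she connects the tap to a pipe exactly when the bit is $1$ (path of length $1$ versus $0$); if it is Bob's, Alice always connects the tap to pipe~$1$ and Bob either leaves pipe~$1$ open (bit $1$) or routes it to a second pipe left open on Alice's side (bit $0$), giving a path of length $1$ versus $2$. A literal is in any case a function $\{0,1\}^n\times\{0,1\}^n\to\{0,1\}$, exactly the input form Lemma~\ref{pbp2gh} expects, so feeding the permutation branching program above into Lemma~\ref{pbp2gh} with $f_i=x_i$ and every $C_i=O(1)$ yields $GH(g)=GH\bigl(g(x_1,\dots,x_k)\bigr)=O(s^{1+\epsilon})$, which is the first assertion.

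For the second assertion I would contrapose: if $L$ denotes the fan-in-2 formula size of $g$, the bound just shown, applied with an arbitrary constant $\epsilon>0$, reads $GH(g)\le O(L^{1+\epsilon})$, i.e.\ $L\ge\Omega\bigl(GH(g)^{1/(1+\epsilon)}\bigr)$; letting $\epsilon\to0$, any lower bound $GH(g)\ge B$ forces formula size $B^{1-o(1)}$. In particular a hypothetical $GH(g)=\Omega(n^{2+\delta})$ would give a formula-size lower bound $\Omega(n^{2+\delta'})$ for some $\delta'>0$, improving on Ne\v ciporuk; the word ``slightly'' in the statement refers precisely to this loss of an arbitrarily small amount in the exponent, inherited from the $s^{1+\epsilon}$ overhead in Giel's simulation.

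I do not anticipate a real obstacle. The only things to be careful about are (i) invoking Giel's simulation at the level of the full fan-in-2 basis rather than the $\{\wedge,\vee,\oplus\}$ basis of Theorem~\ref{thm:circ}'s statement -- legitimate because the quoted Fact is already that general and its proof is noted to produce a strict permutation branching program -- and (ii) checking that the trivial constant-size garden-hose gadgets for literals compose correctly through Lemma~\ref{pbp2gh}, which they do since a literal is a genuine two-argument Boolean function.
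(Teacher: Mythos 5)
Your proposal is correct and follows essentially the same route as the paper, which obtains this corollary directly from Theorem~\ref{thm:circ}: Giel's simulation (which, as noted, applies to arbitrary fan-in-2 gate bases and yields a strict permutation branching program of size $O(s^{1+\epsilon})$) combined with Lemma~\ref{pbp2gh} and the observation that a single variable has $O(1)$ garden-hose complexity. Your explicit one- and two-pipe gadgets for literals and the remark about invoking Giel's Fact at the level of the full fan-in-2 basis are exactly the details the paper leaves implicit.
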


The best lower bound of $\Omega(n^2/\log n)$ known for the size of formulae over the basis of all fan-in 2 gate function is due to Ne\v ciporuk \cite{Ne66}. The Ne\v ciporuk lower bound method (based on counting subfunctions) can also be used to give the best general branching program lower bound of $\Omega(n^2/\log^2 n)$ (see \cite{wegener:complexity}).

Due to the above any lower bound larger than $\Omega(n^{2+\epsilon})$ for the garden-hose model would immediately give lower bounds of almost the same magnitude for formula size and permutation branching program size. Proving super-quadratic lower bounds in these models is a long-standing open problem.

Due to the fact that we have small permutation branching programs for Majority, we can even simulate a more general class of formulae involving a limited number of Majority gates.

\begin{theorem}
Let $F$ be a formula for a Boolean function $g$ on $n$ inputs made of gates $\{\wedge, \vee, \oplus \}$ of arbitrary fan-in.
Additionally there may be at most $O(1)$ Majority gates on any path from the root to the leaves.
 If $F$ has size $s$, then for all constants  $\epsilon>0$ we have
$GH(g) \leq O(s^{1+\epsilon})$.
\end{theorem}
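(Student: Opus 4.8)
The plan is to reduce the mixed formula (with a constant number of Majority gates on every root-to-leaf path) to a situation where Theorem~\ref{thm:circ} and Lemma~\ref{ghmaj} can be applied in tandem, proceeding by induction on the number $d = O(1)$ of Majority layers. First I would observe that any such formula $F$ can be written as a Majority gate (or a constant-depth alternation thereof) applied to subformulae $F_1,\dots,F_m$, each of which already has one fewer Majority gate on every path; more carefully, scan along each path from the root and let the first Majority gate encountered partition the formula into a top part $T$ (which contains only $\{\wedge,\vee,\oplus\}$ gates and whose inputs are Majority-rooted subformulae) and the subformulae hanging below those topmost Majority gates. The top part $T$ has size at most $s$ and uses only $\{\wedge,\vee,\oplus\}$ gates of arbitrary fan-in, so if we already have garden-hose protocols of size $c_i$ for the outputs of the topmost Majority gates, Theorem~\ref{thm:circ} gives a garden-hose protocol of size $O((\operatorname{size}(T))^{1+\epsilon'}\cdot \max_i c_i)$ for $g$.

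Next, for each topmost Majority gate $M_j$ with fan-in $k_j$ and input subformulae $g_{j,1},\dots,g_{j,k_j}$ (each having strictly fewer Majority gates per path), the induction hypothesis gives garden-hose protocols of size $O((s_{j,\ell})^{1+\epsilon''})$ for $g_{j,\ell}$, where $s_{j,\ell}$ is the size of the subformula computing $g_{j,\ell}$. Then Lemma~\ref{ghmaj} yields
\[
GH(M_j(g_{j,1},\dots,g_{j,k_j})) = O\!\left(\Big(\sum_\ell (s_{j,\ell})^{1+\epsilon''}\Big)\cdot \log^3 k_j\right) = O\!\left( s_j^{1+\epsilon''}\cdot \log^3 s_j\right),
\]
where $s_j = \sum_\ell s_{j,\ell}\le s$ is the size of the subformula rooted at $M_j$, and I have used $k_j\le s_j$ together with superadditivity of $t\mapsto t^{1+\epsilon''}$. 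Since $\log^3 s_j \le O(s_j^{\epsilon''})$ absorbing the logs into a slightly larger exponent, this is $O(s_j^{1+2\epsilon''}) \le O(s^{1+2\epsilon''})$. Feeding these back as the $c_i$ in the bound from Theorem~\ref{thm:circ} gives $GH(g) = O(s^{1+\epsilon'}\cdot s^{1+2\epsilon''})$, which is not yet of the desired form because the exponents add rather than stay near $1$.

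The fix, and the step I expect to be the main obstacle, is bookkeeping of the exponents across the $d = O(1)$ levels so that the final exponent is $1+\epsilon$ rather than $d + O(\epsilon)$. The point is that the size parameter does not multiply at each level: the product of the top-part size and the subformula sizes is bounded by $s$, not by $s^d$, because each leaf of $F$ lies in exactly one subformula at each level and the ``size'' quantities partition (up to a constant factor) the leaves of $F$. So I would set up the induction to prove the sharper statement: for a formula with at most $d$ Majority gates per path and size $s$, one has $GH(g)\le O(s^{1+\epsilon})$ where the implied constant depends on $d$ and $\epsilon$, choosing at stage $d$ the internal accuracy parameter to be $\epsilon_d := \epsilon/(d+1)$, so that the accumulated exponent telescopes. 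Concretely, when combining a top part of size $a$ with subprotocols of size $O(b^{1+\epsilon_{d-1}})$ built on subformulae whose sizes sum to $b$, with $ab \le O(s)$ (in fact $a + b \le O(s)$, hence $ab \le O(s^2)$ — here one must be slightly careful and instead track that the relevant product is really bounded by $s^{1+o(1)}$ using $\min(a,b)\le\sqrt{s}$ only when helpful, or better, note $a\cdot\max_i c_i$ with $\max_i c_i = O(b^{1+\epsilon_{d-1}})$ and $a+b\le Cs$ gives $a b^{1+\epsilon_{d-1}} \le C s\cdot b^{\epsilon_{d-1}} \le C s^{1+\epsilon_{d-1}}$), one arrives at $O(s^{1+\epsilon_{d-1}}\cdot s^{\epsilon'_{\text{log}}})$ with the log-cubed factor absorbed, which is $O(s^{1+\epsilon_d})$ for an appropriate choice. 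Carrying this telescoping argument cleanly through the two applications (Theorem~\ref{thm:circ} for the $\{\wedge,\vee,\oplus\}$ part and Lemma~\ref{ghmaj} for each Majority gate) at every one of the $d$ levels, and verifying that the constants remain finite since $d = O(1)$, completes the proof.
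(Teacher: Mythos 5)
Your overall plan is the same as the paper's: split $F$ at the Majority gates, handle the $\{\wedge,\vee,\oplus\}$ regions with Theorem~\ref{thm:circ}, handle each Majority gate with Lemma~\ref{ghmaj}, and keep the exponent under control over the $d=O(1)$ Majority levels by taking accuracy parameters of order $\epsilon/d$ (the paper chooses $\epsilon'<\epsilon/c$, you choose $\epsilon_d=\epsilon/(d+1)$). So the route is not different; the issue is the quantitative bookkeeping, which you yourself flag as the main obstacle, and which as written does not go through.

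The failing step is the inference ``$a+b\le Cs$ gives $a\,b^{1+\epsilon_{d-1}} \le C s\cdot b^{\epsilon_{d-1}}$.'' That needs $ab\le Cs$, which does not follow from $a+b\le Cs$: with a top part of size $a=s/2$ and a single topmost Majority-rooted subformula of size $b=s/2$, one gets $a\,b^{1+\epsilon_{d-1}}=\Theta(s^{2+\epsilon_{d-1}})$, not $O(s^{1+\epsilon_{d-1}})$; the fallback claim $\min(a,b)\le\sqrt{s}$ is false in the same balanced case. Nor is this a cosmetic slip: it reflects a real limitation of using Theorem~\ref{thm:circ} as a black box, since its bound is (top size)$^{1+\epsilon'}$ times the \emph{maximum} leaf cost, so one heavy Majority subformula sitting below a large $\{\wedge,\vee,\oplus\}$ region already produces a near-quadratic bound by this accounting. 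To obtain the additive behaviour you actually need --- total cost roughly (top size plus sum of leaf costs)$^{1+\epsilon'}$ --- one has to open up the composition rather than cite it: for instance, use the finer bound inside the proof of Lemma~\ref{pbp2gh} (the garden-hose size is the width times the \emph{sum over layers} of the cost of the function queried in that layer, which is exactly how Lemma~\ref{ghmaj} achieves its sum form for Majority) together with control of how often the branching program produced by Giel's simulation reads each leaf, or restructure the decomposition so that a heavy subformula is never multiplied by a large top part. The paper's own proof is terse at this point, but it asserts the additive accounting ($O(\tilde s^{1+\epsilon'})$ for the entire subformula below a Majority gate) and then telescopes via $\epsilon'<\epsilon/c$; your proposal instead tries to derive that accounting from an inequality that is false, so the key step of your argument fails as stated.
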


\begin{proof}
Proceeding in reverse  topological order we can replace all sub-formulae below a Majority gate by garden-hose protocols with Theorem \ref{thm:circ}, increasing the size of the sub-formula. Then we can apply Lemma \ref{ghmaj} to replace the sub-formula including the Majority gate by a garden-hose protocol. If the size of the formula below the Majority gate is $\tilde{s}$, then the garden-hose size is $O(\tilde{s}^{1+\epsilon'})$, where the poly-logarithmic factor of Lemma \ref{ghmaj} is hidden in the polynomial increase.
Since every path from root to leaf has at most $c=O(1)$ Majority gates, and we may choose the $\epsilon'$ in Theorem \ref{thm:circ} to be smaller than $\epsilon/c$, we get our result.
\end{proof}

\subsection{The Ne\v ciporuk Bound with Arbitrary Symmetric Gates}

Since garden-hose protocols can even simulate formulae containing some arbitrary fan-in Majority gates, the question arises whether one can hope for super-linear lower bounds at all. Maybe it is hard to show super-linear lower bounds for formulae having Majority gates? Note that very small formulae for the Majority function itself are not known (the currently best construction yields formulae of size $O(n^{3.03})$ \cite{majorityform}), hence we cannot argue that Majority gates do not add power to the model.
In this subsection we sketch the simple observation that the Ne\v ciporuk method \cite{Ne66} can be used to give good lower bounds for formulae made of {\em arbitrary symmetric gates of any fan-in}. Hence there is no obstacle to near-quadratic lower bounds from the formula size connection we have shown. We stress that nevertheless we do not have any super-linear lower bounds for the garden-hose model.

We employ the communication complexity notation for the Ne\v ciporuk bound from \cite{kla:neci}.

\begin{theorem}\label{thm:neci}
Let $f:\{0,1\}^n\to\{0,1\}$ be a Boolean function and $B_1,\ldots, B_k$ a partition of the input bits of $f$. Denote by $D_j(f)$ the deterministic one-way communication complexity of $f$, when Alice receives all inputs except those in $B_j$, and Bob the inputs in $B_j$.
Then the size (number of leaves) of any formula consisting of arbitrary symmetric Boolean gates is at least $\sum D_j(f)/\log n$.
\end{theorem}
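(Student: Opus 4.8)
The plan is to adapt the classical Nečiporuk argument, reformulated in communication-complexity language as in~\cite{kla:neci}, to the setting where gates may be arbitrary symmetric functions of unbounded fan-in. Fix a formula $F$ computing $f$ using arbitrary symmetric gates, and fix a block $B_j$ of the partition. The key structural observation is the usual one: each leaf of $F$ is labeled by a single input variable, so a leaf either reads a variable in $B_j$ or one outside $B_j$. Let $L_j$ be the number of leaves reading variables of $B_j$. As Bob's input varies over $\{0,1\}^{|B_j|}$ (Alice holding everything else fixed), the formula computes some subfunction of the $B_j$-variables; I want to argue that the restriction of $F$ obtained by fixing Alice's variables to any value is, after collapsing, a formula of size at most $L_j$ on the $B_j$-variables, hence one of at most $2^{O(L_j\log L_j)}$ possible such functions — and simultaneously that each such subfunction can be evaluated by Bob sending Alice only $O(\log n)$ bits per leaf-region, giving a one-way protocol of cost $O(L_j)$ from which $D_j(f) \le O(L_j)$ follows, so that $\sum_j L_j \ge \sum_j D_j(f)/O(1)$ and the formula size $\ge \sum_j L_j$.

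More carefully, I would run the communication-complexity form of the bound directly. Here is the protocol Alice and Bob use when Bob holds $B_j$: Alice fixes her variables, which collapses every maximal subtree of $F$ all of whose leaves lie outside $B_j$ to a constant. What remains is a formula $F'$ whose leaves are exactly the (at most $L_j$) leaves reading $B_j$-variables, together with some internal symmetric gates each of which may additionally receive some constant inputs from the collapsed subtrees. Since a symmetric gate with some inputs fixed to constants is again a symmetric gate on the live wires, $F'$ is a formula of size $\le L_j$ over symmetric gates on the $B_j$-variables. The subtlety that makes this a one-way protocol of cost $O(L_j \log n)$ rather than something larger: Bob can send Alice, for the unique path (in the collapsed structure) it matters along, enough information to reconstruct the output. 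Actually the cleanest route is the counting route: Alice's choice of her own input induces a subfunction $h_a:\{0,1\}^{|B_j|}\to\{0,1\}$ which is computed by a symmetric-gate formula of size $\le L_j$; the number of distinct Boolean functions computable by symmetric-gate formulas of size $m$ is at most $2^{O(m\log m)}$ — because a formula of size $m$ has at most $m-1$ gates, each symmetric gate of fan-in $d$ is specified by its $\le d+1 \le m+1$ threshold-pattern bits, and the tree shape costs $O(m)$ bits — so the number of distinct subfunctions $h_a$ is $2^{O(L_j \log L_j)} = 2^{O(L_j\log n)}$ (since $L_j\le\text{size}\le\text{poly}(n)$, we may absorb $\log L_j$ into $\log n$ up to constants). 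On the other hand, one-way communication complexity satisfies $D_j(f) \le \log(\#\text{distinct subfunctions}) + O(1)$ is false in the wrong direction — rather, the number of distinct rows of the communication matrix is $2^{D_j(f)}$ up to the standard fooling/partition argument; so $2^{D_j(f)} \le$ number of distinct subfunctions $\le 2^{O(L_j\log n)}$, giving $D_j(f) = O(L_j\log n)$, i.e. $L_j = \Omega(D_j(f)/\log n)$. Summing over $j$ and using $\sum_j L_j \le \text{size}(F)$ yields $\text{size}(F) \ge \Omega\!\left(\sum_j D_j(f)/\log n\right)$.

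I should double-check the one direction that is genuinely model-dependent, namely that a symmetric gate survives input-fixing as a symmetric gate and that the bound $2^{O(m\log m)}$ on the count of size-$m$ symmetric-gate formulas is correct: a symmetric function of fan-in $d$ is determined by a subset of $\{0,1,\dots,d\}$, i.e. $d+1$ bits; a binary-tree formula skeleton on $m$ leaves has $\le 4^m$ shapes; the fan-in profile is pinned down by the skeleton once we know which internal wire carries which collapsed constant, but the constants coming from Alice's collapsed subtrees only contribute to shifting the threshold set of their parent gate, which is already accounted for by allowing an arbitrary symmetric function there. So the total description length of $h_a$ is $O(m) + \sum_{\text{gates}} O(d_v) = O(m) + O(m) = O(m)$ bits where I used $\sum_v d_v = O(m)$ for a tree — wait, that gives $2^{O(m)}$, even better than $2^{O(m\log m)}$; the $\log n$ in the theorem comes instead from needing to name which of the $\le n$ variables labels each of the $L_j$ relevant leaves when bounding across all blocks, or more precisely from the standard Nečiporuk accounting where one pays $\log$ of the number of leaves. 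Either way the bound $D_j(f) = O(L_j\log n)$ holds with room to spare.

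The main obstacle I anticipate is purely bookkeeping: making precise the claim that fixing Alice's variables collapses $F$ to a formula whose \emph{only} leaves are the $B_j$-leaves, and that the residual symmetric gates are honestly symmetric on their live inputs (a symmetric gate that has received, say, three $1$'s and two $0$'s from dead wires computes $x \mapsto g'(|x|)$ where $g'(t) = g(t+3)$, which is symmetric — fine). A secondary subtlety is that unbounded fan-in means a single gate could read many $B_j$-leaves, so the residual formula's \emph{size} (leaf count) is $L_j$ but its number of gates could in principle be as small as $O(L_j/\text{fan-in})$; this only helps the counting bound. So the proof is a routine-but-careful transcription of Nečiporuk's method, with the only new content being the remark that the gate class is closed under restriction and that symmetric gates of fan-in $d$ have a description of length $O(d)$.
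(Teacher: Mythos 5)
Your overall route --- fix the formula $F$, fix a block $B_j$, observe that a symmetric gate with some inputs set to constants is still symmetric on its live wires, and then bound $D_j(f)$ by counting the distinct subfunctions (rows of the communication matrix) induced by Alice's settings --- is sound and close in spirit to the paper's argument, which instead gives an explicit one-way protocol in which Alice describes the residual behaviour of the part of $F$ above the $B_j$-leaves. However, there is a concrete gap at your central counting step. After Alice's variables are fixed, the residual formula $F'$ indeed has at most $L_j$ leaves, but its number of \emph{gates} is not bounded by $L_j-1$: every gate of $F$ on a path from a $B_j$-leaf to the root that has only one child whose subtree touches $B_j$ survives as a \emph{unary} residual gate, and these form chains whose total length is bounded only by the size (or depth) of $F$ itself --- the very quantity being lower bounded. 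Hence the assertion ``a formula of size $m$ has at most $m-1$ gates,'' and with it the bound $2^{O(m\log m)}$ (or $2^{O(m)}$) on the number of residual formulas/subfunctions, does not follow as written: a naive description of $F'$ costs bits per residual gate, potentially far more than $O(L_j\log n)$. Your ``secondary subtlety'' paragraph only addresses the harmless direction (fewer gates than leaves), not this one.

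The repair is precisely the ingredient the paper makes explicit through its path set $P$: group the gates having a single live input into maximal chains. The composition of such a chain is one of the four functions $\{0,1\}\to\{0,1\}$, so each maximal chain contributes at most $2$ bits to the description of the subfunction, and the number of maximal chains, like the number of gates with at least two live inputs, is $O(L_j)$. For a gate with $d\ge 2$ live inputs the residual symmetric function is determined either by its shifted spectrum on the live wires ($d+1$ bits, and these $d$'s sum to $O(L_j)$) or, as the paper does, by the number of Alice's inputs to that gate equal to $1$, which costs $O(\log n)$ bits once one assumes w.l.o.g.\ that $F$ has size at most $n^2$ (otherwise the theorem is vacuous since $\sum_j D_j(f)\le 2n$). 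With this chain/branching-gate decomposition your row count becomes $2^{O(L_j\log n)}$, giving $D_j(f)=O(L_j\log n)$, and summing over $j$ yields the theorem; note also that the $\log n$ factor does not come from naming leaf labels (these are fixed with $F$ and do not vary with Alice's input) but from the counts at high fan-in gates. With that one addition your argument is essentially equivalent to the paper's.
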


The theorem is as good as the usual Ne\v ciporuk bound except for the log-factor, and can hence be used to show lower bounds of up to $\Omega(n^2/\log^2 n)$ on the formula size of explicit functions like IndirectStorageAccess \cite{wegener:complexity}.

\section{Time Bounded Garden-Hose Protocols}

We now define a notion of time in garden-hose complexity.
\begin{definition}
 Given a garden-hose protocol $P$ for computing function $f$, and an input $x,y$ we refer to the pipes that carry water in $P$ on $x,y$ as the wet pipes. Let $T_P$ denote the maximum number of wet pipes for any input $(x,y)$ in $P$.
\end{definition}

The number of wet pipes on input $x,y$ is equal to the length of the path the water takes and thus corresponds to the time the computation takes.
Thus it makes sense to investigate protocols which have bounded time $T_P$. Furthermore, the question is whether it is possible to simultaneously optimize $T_P$ and the number of pipes used.

\begin{definition}
We define $GH_k(f)$ to be the complexity of an optimal garden-hose protocol $P$ for computing $f$ where for any input $(x,y)$ we have that $T_P$ is bounded by $k$.
\end{definition}

As an example consider the Equality function (test whether $x=y$). The straightforward protocol that compares bit after bit has cost $3n$ but needs time $2n$ in the worst case. On the other hand one can easily obtain a protocol with time 2, that has cost $O(2^n)$: use $2^n$ pipes to communicate $x$ to Bob.
We have the following general lower bound.

\begin{theorem}\label{thm:simround}
For all Boolean functions $f$ we have $GH_k(f) = \Omega(2^{D_k(f)/k})$,
where $D_k(f)$ is the deterministic communication complexity of $f$ with at most $k$ rounds (Alice starting).
\end{theorem}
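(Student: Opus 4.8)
The plan is to show that a time-bounded garden-hose protocol of small size yields a low-cost $k$-round communication protocol, and then contrapose. Suppose $P$ is a garden-hose protocol for $f$ with $s = GH_k(f)$ pipes and $T_P \le k$, i.e.\ on every input $(x,y)$ the water path visits at most $k$ pipes. The key observation is that the water path naturally decomposes into \emph{segments}: a maximal run of edges all belonging to Alice, followed by a maximal run of edges all belonging to Bob, alternating. Each time the path crosses from Alice's side to Bob's side or back, that is one ``round'' of interaction. Since the whole path has length at most $k$, it has at most $k$ such segments (in fact at most $k$ crossings). The first segment starts at the tap on Alice's side, so Alice speaks first, matching the $D_k$ convention.

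The communication protocol simulates the water flow. Alice knows $E_A(x)$ and Bob knows $E_B(y)$; neither knows the other's matching. Starting from the tap, Alice follows her matching edges until she reaches a pipe endpoint that is not matched on her side (the path must leave her side there, since her contribution is a matching). She sends the index of that pipe to Bob --- this costs $\lceil \log s \rceil$ bits, or $\lceil \log(s+1)\rceil$ if we want to also encode ``the path ended on my side'' as a special symbol. Bob then follows his matching from that pipe until he reaches an endpoint unmatched on his side, and sends that pipe index back to Alice, and so on. Because $T_P \le k$, after at most $k$ such messages the path terminates; whoever is holding the ``dead end'' outputs the parity of the total path length (equivalently, the side on which the water spills), which equals $f(x,y)$. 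This is a deterministic protocol with at most $k$ messages, Alice starting, and total communication at most $k \cdot \lceil \log(s+1) \rceil$. Hence $D_k(f) \le k\log(s+1)$, i.e.\ $GH_k(f) = s \ge 2^{D_k(f)/k} - 1 = \Omega(2^{D_k(f)/k})$.

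The main point to get right --- and the only real subtlety --- is the bookkeeping on \emph{who speaks when} and the exact round count: one must check that a maximal Alice-segment followed by a maximal Bob-segment really costs one round each, that consecutive messages alternate players (so the water genuinely crosses the boundary between messages and does not, say, stay on one side for two ``messages''), and that the path-length bound $k$ translates cleanly into at most $k$ messages rather than, say, $2k$. A segment of length $0$ (the tap is unmatched by Alice, so the path has length $0$) is the base case: Alice immediately outputs $0$ with one trivial message, consistent with $D_k \ge 1$. One also has to note that the message alphabet size $s{+}1$ (not $s$) accounts for the terminating symbol; this only changes the bound by the harmless additive constant absorbed into the $\Omega(\cdot)$. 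Everything else is routine: the correctness of the simulation is immediate from the fact that the water path is uniquely determined and the protocol traces exactly that path.
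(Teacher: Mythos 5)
Your simulation is exactly the paper's argument: follow the water flow and have Alice and Bob alternately announce the name of the pipe the water currently enters, so that at most $k$ messages of $\lceil\log(s+1)\rceil$ bits each suffice, giving $D_k(f)\leq O(k\cdot\log GH_k(f))$ and hence $GH_k(f)=\Omega(2^{D_k(f)/k})$. The bookkeeping you flag (alternation of speakers, the terminating symbol, path length versus message count) is handled no more carefully in the paper's own three-sentence proof, so your write-up is if anything slightly more detailed but not a different route.
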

\begin{proof}
We rewrite the claim as $D_k(f)=O(k \cdot \log GH_k(f))$.

Let $P'$ be the garden-hose protocol for $f$ that achieves complexity $GH_k(f)$ for $f$.
The deterministic $k$-round communication protocol for $f$ simulates $P'$ by simply following the flow of the water. In each round Alice or Bob (alternatingly) send the name of the pipe used at that time by $P'$.
\end{proof}

Thus for Equality we have for instance that $GH_{\sqrt{n}}(Equality)=\Omega(2^{\sqrt{n}})$.
There is an almost matching upper bound of $GH_{\sqrt{n}}(Equality) = O(2^{\sqrt{n}} \cdot \sqrt{n})$ by using $\sqrt n$ blocks of $2^{\sqrt n}$ pipes to communicate blocks of $\sqrt n$ bits each.

We can easily deduce a time-cost tradeoff from the above: For Equality the product of time and cost is at least $\Omega(n^2 / \log n)$, because for time $T<o(n/\log n)$ we get a super-linear bound on the size, whereas for larger $T$ we can use that the size is always at least $n$.

\subsection{A Time-Size Hierarchy}

The Pointer Jumping Function is well-studied  in communication complexity. We describe a slight restriction of the problem in which the inputs are permutations of $\{1,\ldots, n\}$.

\begin{definition}
 Let $U$ and $V$ be two disjoint sets of vertices such that $|U| = |V| = n$.

 Let $F_A = \{ f_A | f_A: U \rightarrow V$ and $f_A$ is bijective$\}$ and $F_B = \{ f_B | f_B: V \rightarrow U$  and $f_B$ is bijective$\}$.
For a pair of functions $f_A\in F_A$ and $f_B\in F_B$ define
  $f(v) = \left\{
  \begin{array}{l l}
    f_A(v) & \quad \text{if $v \in U$}\\
    f_B(v) & \quad \text{if $v \in V$.}
  \end{array} \right.$

Then   $f_0(v)=v$ and $f_k(v)=f(f_{k-1}(v))$.

  Finally, the pointer jumping function $PJ_k : F_A \times F_B \rightarrow \{ 0, 1\}$ is defined to be the XOR of all bits in the binary name of $f_k(v_0)$, where $v_0$ is a fixed vertex in $U$.
\end{definition}

Round-communication hierarchies for $PJ_k$ or related functions are investigated in \cite{NW93}. Here we observe that $PJ_k$ gives a time-size hierarchy in the garden-hose model. For simplicity we only consider the case where Alice starts.

\begin{theorem}\label{pj}
\begin{enumerate}
\item $PJ_k$ can be computed by a garden-hose protocol with time $k$ and size $kn$.
\item Any garden-hose protocol for $PJ_k$ that uses time at most $k-1$ has size $2^{\Omega(n/k)}$ for all $k\leq n/(100\log n)$.
\end{enumerate}\end{theorem}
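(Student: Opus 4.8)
The plan is to prove the two parts separately, with the upper bound being a direct construction and the lower bound reducing to the round-communication lower bound for $PJ_k$ via Theorem~\ref{thm:simround}.

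For part~1, I would build the garden-hose protocol so that its ``time slices'' correspond exactly to the $k$ applications of $f$ in the definition of $f_k(v_0)$. Allocate $k$ disjoint blocks $P_1,\dots,P_k$ of $n$ pipes each (so $kn$ pipes total), where the $n$ pipes in block $P_t$ are indexed by the vertices of $U$ if $t$ is odd and by the vertices of $V$ if $t$ is even (recall $|U|=|V|=n$). The invariant is that the water enters block $P_t$ at the pipe indexed by $f_{t-1}(v_0)$. Alice, who knows $f_A:U\to V$, wires, for every odd $t$, the $u$-pipe of $P_t$ to the $f_A(u)$-pipe of $P_{t+1}$; Bob does the symmetric thing using $f_B:V\to U$ for even $t$, connecting the $v$-pipe of $P_t$ to the $f_B(v)$-pipe of $P_{t+1}$. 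The tap is attached to the $v_0$-pipe of $P_1$ (this is Alice's move). After $k$ steps the water sits at the pipe of $P_k$ indexed by $f_k(v_0)$; to read off the XOR of the bits of the name of this vertex, one routes the $\log n$ ``bit positions'' using a further $O(\log n)$ pipes as in the standard Parity gadget, or — to keep size exactly $kn$ — one folds this into the last block by having whoever owns $P_k$ connect each index-pipe either straight to a spill on their own side or across to a spill on the other side according to the parity of that index's name. Either way the time stays $k$ (or $k+O(\log n)$; the statement's $kn$ absorbs lower-order terms), and the output parity equals $PJ_k(f_A,f_B)$.

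For part~2, I would invoke Theorem~\ref{thm:simround}: any garden-hose protocol of time at most $k-1$ yields a $(k-1)$-round deterministic communication protocol (Alice starting) with communication $O((k-1)\log s)$, where $s$ is the size. Hence $D_{k-1}(PJ_k) = O(k\log s)$, so $s = 2^{\Omega(D_{k-1}(PJ_k)/k)}$. It therefore suffices to cite the classical round-elimination lower bound for pointer jumping: with at most $k-1$ rounds (the ``wrong'' number, or the right number with the wrong player starting) one needs $\Omega(n/k)$ communication to compute $PJ_k$ — this is the Nisan--Wigderson style bound referenced as \cite{NW93}, and it applies in the permutation-restricted version of the problem as well, valid in the regime $k \le n/(100\log n)$ which is exactly what keeps the $n/k$ term meaningfully above the $\log n$ overhead. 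Plugging in gives $s = 2^{\Omega(n/k^2)}$; to get the claimed $2^{\Omega(n/k)}$ one uses the sharper form of the bound (the per-round savings in the Pointer Jumping hierarchy are $\Omega(n)$ total spread over the rounds, giving $D_{k-1}(PJ_k)=\Omega(n)$ rather than $\Omega(n/k)$ when the protocol is forced to waste a round), whence $s = 2^{\Omega(n/k)}$.

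The main obstacle is the lower bound bookkeeping in part~2: one must make sure the round-communication lower bound for $PJ_k$ is being quoted in a form that (a) holds for the permutation-restricted inputs $F_A \times F_B$ rather than arbitrary functions, (b) is against $(k-1)$-round protocols with Alice starting (the correct ``off by one in rounds'' regime), and (c) gives the $\Omega(n)$-type communication bound needed to convert, through the $O(k\log s)$ simulation cost of Theorem~\ref{thm:simround}, into the stated $2^{\Omega(n/k)}$ size bound rather than a weaker $2^{\Omega(n/k^2)}$. I expect this to come down to citing and lightly adapting the hierarchy results in \cite{NW93}; the garden-hose-to-communication direction and the upper-bound construction are routine.
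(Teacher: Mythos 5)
Your overall plan is the same as the paper's: blocks of $n$ pipes per pointer application for the upper bound, and the simulation of Theorem~\ref{thm:simround} combined with the Nisan--Wigderson round hierarchy \cite{NW93} for the lower bound. But part~1 as written has a concrete off-by-one: with the tap attached to the $v_0$-pipe of $P_1$ and the wiring from $P_t$ to $P_{t+1}$ performing one application, the water enters $P_k$ at the pipe indexed by $f_{k-1}(v_0)$, not $f_k(v_0)$ as you claim, so reading off ``the parity of that index's name'' in $P_k$ computes the wrong value; the last application is never wired. The paper's construction avoids this by letting the tap connection itself perform the first application (tap to pipe $f_A(v_0)$ of block~1), so that the wiring \emph{into} block $k$ performs the $k$-th application; the output is then handled with no extra pipes or time by having the player who wires into block $k$ simply omit the connection for receiving pipes whose name has XOR $0$ (water then spills early on that player's side), and make it for the others (water flows through block $k$ and spills at its open far end). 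Your two read-out options both fall short of this: the $O(\log n)$ parity gadget gives time $k+O(\log n)$, which does not prove the stated ``time $k$'' (and the exactness matters, since part~2 is a statement about time $k-1$), and ``connect across to a spill on the other side'' is not a primitive of the model -- a pipe end is either left open (spill on that side) or connected to another pipe end on the same side, so forcing a spill on the opposite side costs an extra pipe unless you use the omit-the-connection trick above.

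For part~2 you end up in the right place -- you need the deterministic bound $D_{k-1}(PJ_k)=\Omega(n)$ (Alice starting) rather than $\Omega(n/k)$, and your intermediate detour through $2^{\Omega(n/k^2)}$ should simply be dropped -- but the one piece of real work there is the item you label (a) and then defer to citation: \cite{NW93} proves the lower bound for uniformly random \emph{functions}, whereas here the inputs are restricted to bijections. The paper does not get this for free either; it sketches why the argument adapts (the uniform distribution on pairs of bijections is still a product distribution, the per-vertex information about the next pointer stays a small constant for $k=o(n)$, and the extra information introduced by conditioning on the path -- used vertices being excluded -- can be absorbed into the information from the previous communication). Your proposal should at least acknowledge that this adaptation is an argument to be made, not a black-box invocation of \cite{NW93}.
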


We note that slightly weaker lower bounds hold for the randomized setting.

\section{Randomized Garden-Hose Protocols}
We now bring randomness into the picture and investigate its power in the garden-hose model. Buhrman et al~\cite{BFSS11} have already considered protocols with public randomness. In this section we are mainly interested in the power of private randomness.

\begin{definition}
 Let $RGH^{pub}(f)$ denote the minimum complexity of a garden-hose protocol for computing $f$, where the players have access to public randomness, and the output is correct with probability 2/3 (over the randomness).
Similarly, we can define $RGH^{pri}(f)$, the cost of garden-hose protocols with access to private randomness.
\end{definition}

By standard fingerprinting ideas \cite{KN96} we can observe the following.

\begin{claim}
 $RGH^{pub}(Equality) = O(1)$
\end{claim}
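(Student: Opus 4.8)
The plan is to adapt the standard public-coin fingerprinting protocol for Equality into a garden-hose protocol of constant size. Recall the communication-complexity fact from \cite{KN96}: with shared randomness, Alice and Bob each compute a one-bit fingerprint of their input (say, by picking a random vector $r\in\zo^n$ from the public coins and outputting $\langle r,x\rangle \bmod 2$ and $\langle r,y\rangle \bmod 2$ respectively), and these two bits agree with probability $1$ when $x=y$ and with probability exactly $1/2$ when $x\neq y$; repeating with a constant number $t$ of independent random vectors and accepting iff all $t$ pairs of bits agree gives error $2^{-t}$ on the $x\neq y$ side and zero error on the $x=y$ side. The key observation is that this entire protocol, after the public coins are fixed, is just a constant-cost deterministic communication protocol: Alice sends $t=O(1)$ bits, Bob sends $t$ bits announcing agreement, so $D\le O(1)$.

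First I would invoke Theorem~\ref{thm:dcc}: for the fixed public string $r$, the resulting deterministic protocol tree has $O(1)$ edges (its depth is $O(1)$ and it is a binary tree), hence there is a garden-hose protocol of size $O(1)$ computing that particular Boolean function $\mathrm{Eq}_r(x,y)$. Since the public randomness is shared, Alice and Bob can read it \emph{before} laying their hoses, so the garden-hose construction is allowed to depend on $r$; they simply build the size-$O(1)$ protocol guaranteed by Theorem~\ref{thm:dcc} for the function $\mathrm{Eq}_r$. Because the constant bound from Theorem~\ref{thm:dcc} does not depend on $r$ (the protocol tree has the same shape for every $r$; only the leaf labels and edge labels change), we get a single size bound $s=O(1)$ that works simultaneously for all settings of the public coins.

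Second I would verify correctness and error: for every fixed $r$ the garden-hose protocol computes $\mathrm{Eq}_r$ exactly, and over the choice of $r$ we have $\Pr_r[\mathrm{Eq}_r(x,y)=\mathrm{Equality}(x,y)]=1$ when $x=y$ and $\ge 1-2^{-t}\ge 2/3$ for suitable constant $t$ when $x\neq y$. Hence $RGH^{pub}(\mathrm{Equality})\le s = O(1)$, which is the claim. An even more direct route, avoiding Theorem~\ref{thm:dcc} entirely, is to build the size-$O(1)$ garden-hose protocol by hand: for each of the $t$ rounds use a constant number of pipes so that the water's position after round $j$ encodes whether the first $j$ fingerprint bits have all matched, routing the water to a "spill on Alice's side" pipe the moment a mismatch occurs and otherwise to a final "spill on Bob's side" pipe; since $t=O(1)$ this uses $O(1)$ pipes total.

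I do not expect a serious obstacle here — the statement is essentially an immediate corollary of Theorem~\ref{thm:dcc} together with the textbook $O(1)$ public-coin protocol for Equality. The only point requiring a sentence of care is the interchange of quantifiers: one must note that the garden-hose construction may legitimately depend on the public random string (since it is revealed before the hoses are connected), and that the size bound coming from Theorem~\ref{thm:dcc} is uniform over all values of that string because the underlying protocol tree has bounded size regardless of $r$.
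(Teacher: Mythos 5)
Your proposal is correct and matches the paper's approach: the paper gives no explicit proof beyond invoking ``standard fingerprinting ideas,'' which is precisely your argument of fixing the public random string, running the constant-communication fingerprint comparison, and converting it to an $O(1)$-pipe protocol (e.g.\ via the protocol-tree simulation of Theorem~\ref{thm:dcc}), with the hose layout legitimately depending on the shared randomness. Your remark on the quantifier order (the size bound being uniform over all settings of the coins) is the right point of care and nothing more is needed.
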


\begin{claim}\label{REQn}
 $RGH^{pri}(Equality) = O(n)$, and this is achieved by a constant time protocol.
\end{claim}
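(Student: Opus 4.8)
The plan is to realise a one‑way hashing protocol for Equality directly in the garden‑hose model, using $O(n)$ pipes and a path of length at most $2$, with all the randomness living on Alice's side. The first ingredient is a hash family: I claim there is a collection $\mathcal H=\{h_1,\dots,h_m\}$ of functions $h_i:\{0,1\}^n\to[r]$ with $m=O(n)$ and $r$ a suitable constant (say $r=9$, so $1/r<1/3$ and $r-1$ is even) such that for every pair $x\ne y$ we have $\bigl|\{\,i: h_i(x)=h_i(y)\,\}\bigr|\le m/3$. This is the standard probabilistic/counting argument: take the $h_i$ i.i.d.\ uniformly random; for a fixed pair $x\ne y$ the expected number of collisions is $m/r$, so a Chernoff bound gives $\Pr[\#\text{collisions}>m/3]\le e^{-\Omega(m)}$, and a union bound over the fewer than $2^{2n}$ pairs is satisfied once $m=Cn$ for a large enough constant $C$.

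Given $\mathcal H$, build the garden‑hose protocol as follows. Let the pipe set be $V=\{(i,v): i\in[m],\,v\in[r]\}$, so $|V|=mr=O(n)$, together with the tap $t$. Alice, holding $x$ and a uniformly random private index $i\in[m]$, puts the single edge $(t,(i,h_i(x)))$ into her matching and leaves every other vertex unmatched. Bob, holding $y$, works block by block: in block $i'$ he computes $v_{i'}^{\ast}=h_{i'}(y)$, leaves the vertex $(i',v_{i'}^{\ast})$ unmatched (``accept''), and pairs the remaining $r-1$ vertices $\{(i',v):v\ne v_{i'}^{\ast}\}$ arbitrarily among themselves inside block $i'$ (possible since $r-1$ is even). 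Tracing the water: it runs $t\to(i,h_i(x))$; if $h_i(x)=v_i^{\ast}$ the path stops with length $1$ (output $1$); otherwise Bob has matched $(i,h_i(x))$ to some $(i,v')$ with $v'\ne v_i^{\ast}$, which Alice left unmatched, so the path is $t\to(i,h_i(x))\to(i,v')$ of length $2$ (output $0$). Hence the protocol outputs exactly $[\,h_i(x)=h_i(y)\,]$, and since the path never leaves block $i$ its length is always at most $2$, so $T_P\le 2$.

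For correctness: if $x=y$ the output is $1$ for every $i$, so Equality is computed with zero error on equal pairs; if $x\ne y$ the output is $1$ only when $h_i(x)=h_i(y)$, which by the choice of $\mathcal H$ holds for at most a $1/3$ fraction of the indices, so the output equals $\mathrm{Equality}(x,y)$ with probability at least $2/3$ over Alice's private coin. The size is $mr+1=O(n)$ and the time is constant, which gives $RGH^{pri}(Equality)=O(n)$ achieved by a constant‑time protocol.

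There is no real difficulty here; the two points that need a little care are (i) choosing the output alphabet size $r$ to be a constant with $1/r<1/3$ \emph{and} $r-1$ even, so that both the Chernoff/union‑bound counting and Bob's pairing go through, and (ii) checking that ``no collision in block $i$'' genuinely routes the water back to Alice's side — which is precisely why Bob pairs the $r-1$ non‑accepting vertices of each block among themselves, rather than routing them all to one shared sink vertex (a matching may use any given sink only once).
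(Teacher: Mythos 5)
Your proposal is correct and is in essence the paper's own argument with its two black boxes made explicit: the fixed family of $m=O(n)$ constant-range hash functions with at most an $m/3$ collision fraction for every pair $x\ne y$ is exactly the content of Newman's theorem applied to the standard public-coin Equality protocol, and your block gadget (Alice routes the tap to pipe $(i,h_i(x))$; Bob leaves $(i,h_i(y))$ open and pairs the remaining $r-1$ pipes within the block) is a direct constant-time realization of the resulting $O(\log n)$-bit private-coin protocol, replacing the paper's appeal to the protocol-tree simulation of Theorem~\ref{thm:dcc}. Both routes give $O(n)$ pipes, constant time, and error at most $1/3$, so the claim follows either way.
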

\begin{proof}
The second claim follows from Newman's theorem {\cite{newman:random}} showing that any public coin protocol with communication cost $c$ can be converted into a private coin protocol with communication cost $c+ \log n +O(1)$ bits on inputs of length $n$ together with the standard public coin protocol for Equality, and the protocol tree simulation of Theorem \ref{thm:dcc}.
\end{proof}

Of course we already know that even the deterministic complexity of Equality is $O(n)$, hence the only thing achieved by the above protocol is the reduction in time complexity. Note that due to our result of the previous section computing Equality deterministically in constant time needs exponentially many pipes.

Buhrman et al. {\cite{BFSS11}} have shown how to de-randomize a public coin protocols at the cost of increasing size by a factor of $O(n)$, so the factor $n$ in the separation between public coin and deterministic protocols above is the best that can be achieved. This raises the question whether private coin protocols can ever be more efficient in size than the optimal deterministic protocol.
We now show that there are no very efficient private coin protocols for Equality.

\begin{claim}
 $RGH^{pri}(Equality) = \Omega(\sqrt{n}/\log n)$
\end{claim}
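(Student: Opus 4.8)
The plan is to connect a private-coin garden-hose protocol for Equality to a private-coin \emph{simultaneous message passing} protocol, and then invoke the known $\Omega(\sqrt n)$ lower bound on $R^{||}(Equality)$ (Newman--Szegedy / Babai--Kimmel). The key observation is that any garden-hose protocol of size $s$ induces a low-complexity communication protocol: Alice's and Bob's connections are matchings on the $s$ pipes, and the path the water takes is determined by ``following the flow''. If we additionally bound the time (number of wet pipes), Theorem~\ref{thm:simround} already gives $D_k \le O(k \log GH_k)$; but for Equality we want to exploit that the standard private-coin protocol underlying Claim~\ref{REQn} runs in \emph{constant time}, and a constant-time garden-hose protocol is essentially simultaneous.

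Concretely, first I would argue that a garden-hose protocol with time $T_P = O(1)$ and size $s$ yields a private-coin \emph{simultaneous} protocol for the same function with cost $O(\log s)$: Alice sends the referee the (constant-length) sequence of names of her matching edges that the water would traverse if it stayed on her hoses, and the names of the pipes at which control would pass to Bob; Bob symmetrically sends his relevant matching edges; since the path has length $O(1)$, only $O(1)$ edges from each player are ever relevant, each specified by $O(\log s)$ bits, and the referee can assemble the path and read off its parity. (One must be slightly careful that Alice and Bob do not know the path in advance, so they must send enough of their matching to cover all $O(1)$ possible continuations; since the path length is bounded by the constant $T_P$, the number of pipes either player could possibly touch is still $O(1) \cdot$ (branching) $= O(1)$, so $O(\log s)$ bits suffice. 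If a naive bound is not $O(1)$ per player, one can instead let each player send, for each of the $\le T_P$ time steps, the name of the pipe used and who owns the connection — still $O(T_P \log s)$ bits.) The referee then outputs the parity of the reconstructed path.

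Next, apply this to Equality. The lower bound $R^{||}(Equality) = \Omega(\sqrt n)$ is classical \cite{KN96}. If $RGH^{pri}(Equality)$ were achieved by a protocol that is \emph{also} constant time, the simulation above gives a private-coin simultaneous protocol of cost $O(\log RGH^{pri}(Equality))$ — that would contradict $R^{||}(Equality) = \Omega(\sqrt n)$ only if the cost were $2^{\Omega(\sqrt n)}$, which is too weak. So the constant-time route alone is not enough: we need to rule out the possibility that a \emph{cheap} private-coin protocol of small size uses \emph{many} rounds. The honest fix is to not assume constant time. For a general private-coin garden-hose protocol of size $s$ for Equality, fix the private randomness of Alice and of Bob to their best values; this yields a deterministic garden-hose protocol of size $s$ computing Equality correctly on a $\ge 2/3$ fraction of inputs — in particular correctly on a $2/3$ fraction of the diagonal and a $2/3$ fraction of off-diagonal inputs. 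Such an ``approximate'' garden-hose protocol can be turned, by the flow-following simulation of Theorem~\ref{thm:dcc}'s proof idea, into a (distributional, constant-round after balancing, or general) communication protocol of cost $O(\log s)$ computing Equality with advantage $\Omega(1)$ under the uniform-ish distribution on $\{$diagonal, off-diagonal$\}$; but the distributional communication complexity of approximating Equality is $\Omega(\log n)$ only, which again is too weak.

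The main obstacle, and the crux of the real argument, is therefore to get a $\sqrt n$ rather than $\log n$ lower bound, and I expect this is done via the simultaneous-messages connection applied with care to the \emph{private} randomness: a private-coin garden-hose protocol of size $s$ and time $T$ should yield a private-coin SMP protocol of cost $O(T \log s)$, and then $R^{||}(Equality) = \Omega(\sqrt n)$ forces $T \log s = \Omega(\sqrt n)$. Combined with the fact that reducing $T$ below $\Theta(\log s)$ is impossible for a size-$s$ protocol on Equality without blowing up $s$ (via Theorem~\ref{thm:simround}: $D_T(Equality) = \Omega(n) \Rightarrow T \log s = \Omega(n)$, hence $T = \Omega(n/\log s)$), we substitute: if $s \le 2^{o(\sqrt n/\log n)}$ then $\log s = o(\sqrt n/\log n)$, so $T = \Omega(n/\log s) = \omega(\sqrt n \log n)$, and then $T\log s$ could still be small — so the two bounds must be balanced. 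Optimizing $T \log s = \Omega(\sqrt n)$ against $T = \Omega(n / \log s)$ gives $(\log s)^2 = \Omega(n \cdot \log s / \sqrt n \cdot \ldots)$; carrying the bookkeeping through yields $\log s = \Omega(\sqrt n / \log n)$, i.e. $RGH^{pri}(Equality) = s = \Omega(\sqrt n / \log n)$, which is exactly the claimed bound. I would present the argument in this order: (i) private-coin garden-hose $\Rightarrow$ private-coin SMP with cost $O(T_P \log s)$; (ii) plug in $R^{||}(Equality) = \Omega(\sqrt n)$ to get $T_P \log s = \Omega(\sqrt n)$; (iii) plug in Theorem~\ref{thm:simround} with $k = T_P$ and $D_k(Equality) = \Omega(n)$ to get $T_P \log s = \Omega(n)$... wait, that is even stronger and already finishes it — so in fact step (iii) alone gives $s = 2^{\Omega(n/T_P)}$ and combined with $T_P \le s$ trivially we get $s \log s = \Omega(n)$, i.e. $s = \Omega(n/\log n)$, not $\sqrt n$; the $\sqrt n$ genuinely needs the SMP bound in step (ii), because the SMP lower bound is the only ingredient sensitive to \emph{private} (as opposed to public) randomness, and public randomness would kill the bound entirely by Claim giving $RGH^{pub}(Equality)=O(1)$. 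So the indispensable, and hardest, step is (i): correctly simulating a private-coin \emph{time-bounded} garden-hose protocol by a private-coin simultaneous protocol, being careful that the referee can reconstruct the parity of the water's path from one message per player of length $O(T_P \log s)$.
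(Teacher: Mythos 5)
Your high-level plan (reduce a private-coin garden-hose protocol to a private-coin simultaneous message protocol and invoke the Newman--Szegedy bound $R^{||}(Equality)=\Omega(\sqrt n)$) is exactly the paper's plan, but your execution of the reduction has a genuine gap, and it is precisely the step you yourself flag as the crux. You try to build an SMP protocol of cost $O(T_P\log s)$ in which each player sends only the edges of their matching that lie on (or near) the water's path. In a simultaneous protocol neither player can do this: which of Alice's edges are relevant depends on Bob's matching. After the water crosses to Bob's side once, it can return to Alice at \emph{any} of the $s$ pipes, so the ``branching'' per step is $s$, not $O(1)$, and already for $T_P\ge 3$ Alice may need to describe essentially her whole matching. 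Your fallback (``each player sends, for each time step, the name of the pipe used'') is exactly the interactive simulation of Theorem~\ref{thm:simround}; it cannot be carried out simultaneously, because the pipe used at step $i$ depends on both inputs. The subsequent ``balancing'' of $T\log s=\Omega(\sqrt n)$ against $T=\Omega(n/\log s)$ is built on this unsound premise (and also misapplies Theorem~\ref{thm:simround}, which converts a \emph{deterministic} garden-hose protocol into a deterministic protocol, to a private-coin randomized one, where following the flow only yields a private-coin interactive protocol and hence only an $\Omega(\log n)$ bound), and its conclusion conflates $s$ with $\log s$: $\log s=\Omega(\sqrt n/\log n)$ is not the claimed bound $s=\Omega(\sqrt n/\log n)$.

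The missing observation, which is how the paper argues, is that no time bound is needed at all: in the SMP protocol Alice and Bob simply send their \emph{entire} connection patterns (their matchings, fixed after their private coin flips) to the referee. A matching on $s$ pipes is described by $O(s\log s)$ bits, and the referee can then deterministically trace the water and output the parity of the path. This gives $R^{||}(f)=O\bigl(RGH^{pri}(f)\cdot\log RGH^{pri}(f)\bigr)$ for every $f$, and plugging in $R^{||}(Equality)=\Omega(\sqrt n)$ yields $s\log s=\Omega(\sqrt n)$, i.e.\ $RGH^{pri}(Equality)=\Omega(\sqrt n/\log n)$. So the bound is $\sqrt n/\log n$ because the message length is $s\log s$ rather than $\log s$; no appeal to $T_P$, to $D_k$, or to any time--size tradeoff is needed, and indeed the claim as stated has no time restriction to exploit.
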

\begin{proof}
To prove this we first note that $RGH^{pri}(f) = \Omega(R^{||}(f) / \log R^{||}(f))$, where $R^{||}(f)$ is the cost of randomized private coin simultaneous message protocols for $f$ (Alice and Bob can send their connections to the referee).
Hence, $RGH^{pri}(f) = \Omega(R^{|| pri}(f) / \log R^{|| pri}(f))$, but Newman and Szegedy \cite{New96} show that $RGH^{pri}(Equality) =\Omega(\sqrt{n})$.
\end{proof}

\section{Open Problems}
\begin{itemize}
\item We show that getting lower bounds on $GH(f)$ larger than $\Omega(n^{2+\epsilon})$ will be hard. But we know of no obstacles to proving super-linear lower bounds.
\item Possible candidates for quadratic lower bounds could be the Disjointness function with set size $n$ and universe size $n^2$, and the IndirectStorageAccess function.
\item Consider the garden-hose matrix $G_s$ as a communication matrix. How many distinct rows does $G_s$ have? What is the deterministic communication complexity of $G_s$? The best upper bound is $O(s\log s)$, and the lower bound is $\Omega(s)$. An improved lower bound would give a problem, for which $D(f)$ is larger than $GH(f)$.
\item We have proved $RGH^{pri}(Equality) = \Omega(\sqrt{n}/\log n)$. Is it true that $RGH^{pri}(Equality) = \Theta(n)$? Is there any problem where $RGH^{pri}(f)$ is smaller than $GH(f)$?
\item It would be interesting to investigate the relation between the garden-hose model and memoryless communication complexity, i.e., a model in which Alice and Bob must send messages depending on their input and the message just received only. The garden-hose model is memoryless, but also reversible.
\end{itemize}

\section*{Acknowledgement}

We thank an anonymous referee for pointing out a mistake in an earlier version of this paper.
\bibliographystyle{plain}
\bibliography{bibo}

\section{Appendix}

\subsection{Non-deterministic Communication}

\begin{proof}[Proof of Theorem \ref{thm:ncc}]
Consider a deterministic garden-hose protocol $P$ for $f$ using $s$ pipes.
Maybe the most natural approach to simulate $P$'s computation by a non-deterministic communication protocol would be to guess the path that the water takes, and verify this guess locally by Alice and Bob. There are, however, too many paths for this to lead to good bounds. Instead we use a coarser guess.
For any given input $x,y$ in a computation of $P$ the water traverses a set $W(x,y)$ of pipes. We refer to these pipes as the {\em wet} pipes in $P$ on $x,y$. In general a set of wet pipes can correspond to several paths through the network, which must use only edges from the set.

In the non-deterministic protocol Alice guesses a set $S$ of pipes that is supposed to be $W(x,y)$.
Since $|W(x,u)|$ is odd if and only if $f(x,y)=1$ the size of $S$ immediately tells us whether $S$ is a witness for 1-inputs or 0-inputs.

Consider an even size set $S$.
Alice computes the connections of the pipes on her side using her input $x$ (as used in the garden-hose protocol).
Her connections are {\em consistent} with $S$, iff the tap is connected to a pipe in $S$, and the other pipes in $S$ are all connected in pairs, except one, which is open. Note that none of the pipes in $S$ may be connected to a pipe outside of $S$.
Similarly, $S$ is consistent with Bob's connections (based on $y$), if all the pipes in $S$ are paired up (no pipe in $S$ is open and no pipe in $S$ is connected to a pipe outside $S$).

For odd size $S$ we use an analogous definition of consistency: Now Alice has no open pipe in $S$ and all pipes in $S$ are paired up except the one connected to the tap, and Bob has all pipes in $S$ paired up except one that is open.

Suppose that $S$ is consistent with the connections defined by $x,y$.
Denote by $P(x,y)$ the path the water takes in the garden-hose protocol. We claim that all the pipes in $P(x,y)$ are in $S$, and that the remaining pipes in $S$ form cycles. If this is the case then the non-deterministic protocol is correct: Since cycles have even length, subtracting them does not change the fact that $|S|$ is even or odd, and hence the size of $S$ and $P(x,y)$ have the same parity, i.e., a consistent $S$ determines the function value correctly. Also note that the communication complexity of the non-deterministic protocol is at most $GH(f)$+1, since a subset of the pipes used can be communicated with $s$ bits: Alice guesses an $S$ that is consistent with her input and sends it to Bob, who accepts/rejects if $S$ is also consistent with his input, otherwise he gives up (accepting/rejecting takes one additional bit of communication). Note that for partial functions no consistent $S$ may exist for Alice to choose, but in that case she can give up without a result.

To establish correctness we have to show that all pipes in $P(x,y)$ are in $S$ (and the remaining pipes in $S$ form cycles). Clearly the starting pipe (the one connected to the tap) is in $S$ by the definition of consistency. All remaining pipes in $S$ on Bob's and Alice's side are either paired up or (for exactly one pipe) open. Hence we can follow the flow of water without leaving $S$. This implies that $P(x,y)$ is in $S$, and since removing $P(x,y)$ from $S$ leaves no open pipes all the remaining pipes in $S$ must form a set of cycles.
\end{proof}

\subsection{Garden-Hose and Protocol Trees}

\begin{proof}[Proof of Theorem \ref{thm:dcc}:]

Given a protocol tree (with $k$ edges) of a two way communication protocol $P$ for any function $f$ we construct a garden-hose protocol with at most $k$ pipes.

We describe the construction in a recursive way. Let $v$ be any node of the protocol tree belonging to Alice, with children $u_1,\ldots, u_d$ belonging to Bob. In the protocol tree rooted at $v$ a function $f_v$ is computed. If none of the $u_i$ are leaves, then we assume by induction that we can construct a garden hose protocol $P_i$ for each of the children, where $P_i$ uses at most $s_i$ many pipes, and $s_i$ is the number of edges in the subtree of $u_i$. The $P_i$ have the tap on Bob's side. To find a garden-hose protocol for $v$, we use $d+\sum s_i$ pipes. Alice sends the water through pipe $i$ to communicate the message corresponding to the edge to $u_i$. Furthermore the right end of pipe $i$ is connected to the tap of a copy of $P_i$. The number of pipes used $(d+\sum s_i)$ is at most the number of edges in the protocol tree. If one or two of the $u_i$ are leaves, we use the same construction, except that for an accepting leaf we use one extra pipe that is open on Bob's end, and for a rejecting leaf we just let the water spill at Alice's pipe. It is easy to see by induction that the garden-hose protocol accepts on $x,y$ if and only if the protocol tree ends in an accepting leaf.
\end{proof}

\subsection{One Spilling Pipe}

\begin{proof}[Proof of Lemma \ref{one-output}]
 Fix a protocol $P$ that uses $s$ pipes to compute $f$. In the protocol $P$ Alice makes the connections on her side based on her input $x$. Similarly Bob's connections are based on his input $y$.
Denote the set of pipes that are open on Alice's side by $S_A$ and the set of  pipes that are open on Bob's side by $S_B$.

 In the new protocol $P'$ Alice and Bob have $3s$ pipes arranged into 3 blocks of $s$ pipes each. Let's call them $B_1,B_2$ and $B_3$.
 The main idea is to use $B_1$ to compute $f$ and then use $B_2$ and $B_3$ to `un-compute' $f$ (to remove the extra information provided by the multiple spilling pipes).

 In the construction of $P'$ Alice and Bob make their connections on $B_1,B_2$ and $B_3$ separately, exactly the same way they did in $P$ for $s$ pipes. Alice then connects $B_1$'s tap-pipe to the tap and keeps the tap-pipes of $B_2, B_3$ open. They then add the following connections: Alice connects every pipe $i\in S_A$ in $B_1$ to pipe $i\in S_A$ in $B_2$ and Bob connects every pipe $i\in S_B$ in $B_1$ to pipe $i\in S_B$ in $B_3$. Note that those pipes were open before they were connected as they were all spilling pipes. $B_1$ now does not have any open pipes. The only pipes that will ever spill in $B_2$ and $B_3$ are their taps (there may be other open pipes but it is easy to see that they never spill). The tap-pipes of $B_2$ and $B_3$ are both on Alice's side. Finally, Alice uses one more pipe, and connects the tap-pipe of $B_3$ to the new pipe. Figure 1 shows an example of the construction.

\begin{figure}
\begin{center}
 \includegraphics[scale=0.5]{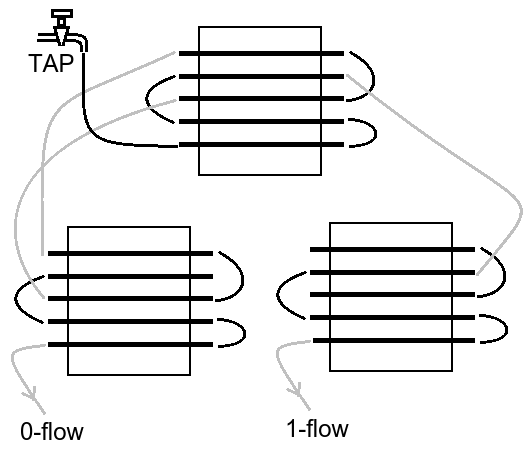}
 \caption{The Construction in Lemma \ref{one-output}}
\end{center}
\end{figure}

The size of the new protocol $P'$ is exactly $3s+1$, and there is exactly one spilling pipe on each side, namely the tap pipes of $B_2$ and $B_3$, because the only other open pipes are the $S_A$ pipes in $B_3$ and the $S_B$ pipes in $B_2$. These cannot be reached by the water. All connections made are done by Alice and Bob alone. We now argue that the protocol computes $f(x,y)$ correctly.

Notice that if $f(x,y)=0$, then water flows through $B_1$ and ends at one of the pipes in $S_A$. This pipe is connected to the corresponding pipe in $B_2$. So the water follows the same path backwards in $B_2$ until it reaches the tap-pipe in $B_2$. This pipe is open on Alice's side. Hence water spills on Alice's side making the output 0 (and it spills at the tap of $B_2$).

Similarly, if $f(x,y)=1$, water flows through $B_1$ and ends at one of the pipes in $S_B$ on Bob's side. Since this pipe is connected to the corresponding pipe in $B_3$ the water flows backwards din $B_3$ until it reaches the tap-pipe of $B_3$. This is on Alice's side and connected to the extra pipe. This makes the water to spill on Bob's side as desired.
\end{proof}

\subsection{Permutation Branching Programs to Garden-Hose}

\begin{proof}[Proof of Lemma \ref{pbp2gh}]
 In Lemma \ref{one-output} we have seen that we can turn a garden-hose protocol with multiple spilling pipes into a protocol with exactly one spilling pipe per side. Such a protocol acts exactly as a node in a branching program, except that its decision is based on $f_i(x_i,y_i)$. This observation suffices to simulate decision trees, but in a branching program nodes can have in-degree larger than 1, and we cannot pump water from several sources into a single garden-hose protocol.

We now show how to construct a garden-hose protocol for $g(f_1, f_2,..., f_k)$.
Given a loose permutation branching program for $g$  of size $S$, we show how to construct a garden-hose protocol.

Let $G$ denote the graph of the branching program. $G$ consists of $T$ layers $L_0,\ldots, L_{T-1}$, where the first layer has just one node (the source), the last layer 2 nodes (the sinks), and all intermediate layers have $W$ nodes, so the size is $S=(T-2)W+3$.
Layer $L_i$ queries some variable $z_i$, whose value is $f_i(x_i,y_i)$. The 1-edges between $L_i$ and $L_{i+1}$ are $E^1_i$, the 0-edges $E^0_i$.

The construction goes by replacing the nodes of each layer by the garden-hose protocols $P_i$ for $f_i$.
Each layer uses $2W$ copies of $P_i$, arranged in two layers. We refer to these copies as the upper and lower copies of $P_i$, each numbered from $1$ to $W$ (and implicitly by their level). Essentially we need the first layer to compute $f_i$, and the second layer to un-compute, since we only want to remember the name of the current vertex in $G$, not the value of $f_i$.

 If $e=(j,k)\in E_i^1$, then we connect the 1-spill pipe of the upper $j$-th copy of $P_i$ to the 1-spill pipe of the lower $k$-th copy of $P_{i}$. Similarly we make the connections for the 0-spill pipes (on Alice's side).

 To connect layers we connect the tap-pipes on each lower copy $j$ of a level $L_i$ to the tap-pipes of an upper copy $j$ on level $L_{i+1}$. On level $L_0$ the tap-pipe of an upper copy is connected to Alice's tap according to the branching program.

\begin{figure}
\begin{center}
 \includegraphics[scale=0.5]{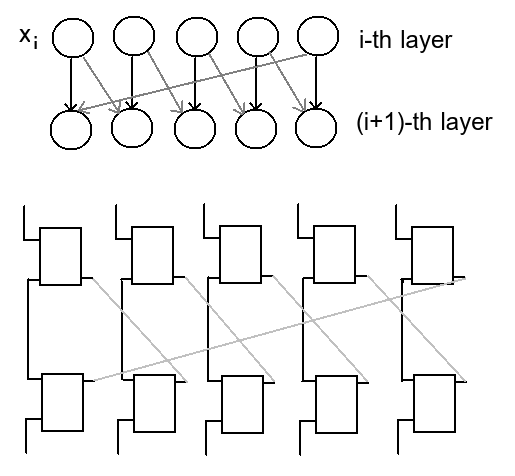}
 \caption{Permutation Branching Program to Garden-Hose Protocol Construction}\label{fig:pbp}
\end{center}
\end{figure}

 Figure \ref{fig:pbp} shows an example of the construction, where each block is a garden-hose protocol to compute $f_i$.

For every edge that goes to the accepting sink of the branching program we use one pipe that is connected to the corresponding upper copy
on Alice's side, if the corresponding spilling pipe is on Alice's side. Otherwise we leave the spilling pipe open. We proceed analogously for edges to the rejecting sink.

The size of the garden-hose protocol is at most $2W\cdot\sum_{i=1}^L C_i\leq \max C_i\cdot 2WL$.
\end{proof}

\subsection{A Permutation Branching Program for Majority}

\begin{proof}[Proof of Lemma \ref{pbpmaj}]

In 1997, Sinha et al.~\cite{ST97} described a branching program of size $O(\frac{n \log^3 n}{\log\log n \log\log\log n})$ for computing Majority. Unfortunately the branching program they construct is not a permutation branching program.
Thus it is not immediately clear how to convert their construction into a garden-hose protocol.

To describe a permutation branching program for Majority we first need permutation branching programs for computing the sum of the inputs mod $r$ for small $r$. Denote by $Mod_r$ the (non-Boolean) function $Mod_r(x_1,\ldots, x_n)=\sum_i x_i$ mod $r$. The following is easy to see.

\begin{claim} $Mod_r(x_1,\ldots, x_n)$ can be computed by permutation branching program of width $r$ so that each input $x$ with $|x|=i$, when starting on the top level at node $j$ ends at node $i+j$ mod $r$ on the last level.\end{claim}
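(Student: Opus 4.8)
The plan is to build the permutation branching program for $Mod_r$ directly from the "increment cycle" on $r$ states, which is manifestly a permutation of $\{1,\ldots,r\}$. Concretely, I would use $n$ layers, one per input bit $x_i$, each layer having exactly $r$ nodes which I label by $\mathbf{Z}/r\mathbf{Z}$ (i.e.\ $\{0,1,\ldots,r-1\}$). At layer $i$ I query $x_i$; the $0$-edges send node $j$ to node $j$ in layer $i+1$ (the identity permutation), and the $1$-edges send node $j$ to node $j+1 \bmod r$ in layer $i+1$ (the cyclic shift). Both of these are bijections from $\{1,\ldots,r\}$ to $\{1,\ldots,r\}$, so the program is a strict permutation branching program of width $r$, with no edges to $\mathtt{accept}/\mathtt{reject}$ from internal layers. (Since $Mod_r$ is non-Boolean, one simply reads off the node index on the last layer as the output value rather than having two sinks; this is the mild abuse already signalled by calling $Mod_r$ "non-Boolean".)

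The key step is then the invariant: I claim that for any starting node $j$ on layer $0$ and any input $x$, the computation ends on layer $n$ at node $j + |x| \bmod r$, where $|x| = \sum_i x_i$. This follows by a one-line induction on the layer number $\ell$: after processing $x_1,\ldots,x_\ell$, the current node is $j + \sum_{i\le \ell} x_i \bmod r$, because each queried bit equal to $1$ advances the node by exactly $1$ (mod $r$) and each bit equal to $0$ leaves it fixed; the base case $\ell = 0$ is immediate. Taking $\ell = n$ gives the claimed end node $i + j \bmod r$ when $|x| = i$, which is exactly the statement of the claim.

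I do not anticipate a genuine obstacle here — the construction is essentially a definition unwinding. The only points that need a word of care are (i) checking this meets the formal definition of a permutation branching program from the Preliminaries, namely that \emph{both} the $0$-edges and the $1$-edges between consecutive layers form injective maps into $\{1,\ldots,k\}\cup\{\mathtt{accept},\mathtt{reject}\}$ — here they are in fact full permutations of $\{1,\ldots,r\}$, so this is clear; and (ii) noting that $Mod_r$ has more than two output values, so strictly speaking it falls slightly outside the branching-program formalism as literally written, but the "start at $j$, end at $i+j \bmod r$" formulation sidesteps this by treating the layer-$n$ node index as the output, and it is precisely this input-threading form of the statement that will be needed when these $Mod_r$ gadgets are composed to build the full $O(n\log^3 n)$ program for $MAJ$ in the proof of Lemma~\ref{pbpmaj}.
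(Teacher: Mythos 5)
Your construction is correct and is exactly the intended one: the paper dismisses this claim as ``easy to see,'' and the implicit argument is precisely your layered increment-cycle program (identity on $0$-edges, cyclic shift on $1$-edges) with the one-line induction on layers. Your side remarks about the formal permutation-BP definition and the non-Boolean output of $Mod_r$ are accurate but not points of divergence from the paper.
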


We call this permutation branching program a modulus-$r$ box.
The {\em join} of two modulus $r_1$ resp.~$r_2$ boxes is a new branching program, in which bottom level nodes of the first box are identified in some way with top level nodes of the second. We employ the following main technical result of Sinha et al.~\cite{ST97}, which describes an approximate divider.

\begin{fact} \label{sinha} \cite{ST97} Fix the length $M$ of an interval of natural numbers.
 There are $k\leq \log M$ prime numbers $r_2 < r_3 < \cdots < r_{k}$, where $r_2>4\log M$ and $r_k<12\log M$ and a number $r_1=2^t$ such that
 and $2M\leq \Pi_{1\leq i \leq k} r_i$ and $r_1<r_2$. Set $M'=M/r_2$. Consider inputs $x$ such that $b\leq |x|\leq b+M-1$.

Then there is a way to join $k$ modulus-$r_i$ boxes (in order $r_1\ldots, r_k$) into a single branching program, such that all inputs $x$ reaching the sink nodes named $lM'$ mod $r_k$ for some $0\leq l<r_k$ (in the last box) satisfy that $|x|$ belongs to one of $r_2$ intervals of length $(k-1)M'$ in $[b,b+M-1]$. The intervals overlap, and each point in $[b,b+M-1]$ is in $k-1$ intervals.

Furthermore, the connections between the boxes are such that every output node of the $r_i$ box is connected to one input node of the $r_{i+1}$ box, and every input node of the $r_{i+1}$ box is connected to at most one output node of the $r_i$ box.
\end{fact}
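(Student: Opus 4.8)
The plan is to reconstruct the approximate‑divider construction of Sinha and Thathachar~\cite{ST97}, which rests on the Chinese Remainder Theorem together with a careful choice of moduli. First I would fix the moduli: we want primes $r_2<\cdots<r_k$ in the window $(4\log M,\,12\log M)$ together with a power of two $r_1=2^t<r_2$ so that $\prod_{i=1}^k r_i\ge 2M$ with $k\le\log M$. By standard estimates on the distribution of primes, the number of primes in $(4\log M,\,12\log M)$ is $\Omega(\log M/\log\log M)$, so we can select enough of them; since each chosen prime exceeds $4\log M>2$ and $r_1=2^t$ is a power of two with $r_1<r_2$, all $k$ moduli are pairwise coprime. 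Multiplying the smallest such primes already gives a product $\ge 2M$ after at most $\lceil\log(2M)/\log(4\log M)\rceil\le\log M$ factors, which fixes $k$; one then picks $t$ maximal with $2^t<r_2$, which also pins down $M'=M/r_2$.

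Next I would recall the modulus‑$r$ boxes from the preceding Claim: a width‑$r$ permutation branching program on $x_1,\dots,x_n$ that, started at top node $j$, ends at bottom node $j+|x|\bmod r$; it is literally a counter mod $r$ and is a strict permutation branching program. The branching program of the Fact is obtained by stacking a modulus‑$r_1$ box, then a modulus‑$r_2$ box, up to a modulus‑$r_k$ box, where the bottom level of box $i$ is glued to the top level of box $i+1$ by a map $\phi_i:\{0,\dots,r_i-1\}\to\{0,\dots,r_{i+1}-1\}$ described below. Since a modulus box read top‑to‑bottom simply adds $|x|$ modulo its width, the composite, restricted to any fixed weight $w=|x|$, is the deterministic path
\[
0\;\longmapsto\; w\bmod r_1\;\xrightarrow{\phi_1}\;\phi_1(w\bmod r_1)+w\bmod r_2\;\xrightarrow{\phi_2}\;\cdots\;\longmapsto\; F(w),
\]
ending at a sink node $F(w)\in\{0,\dots,r_k-1\}$ depending only on $w$.

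The heart of the argument — and the step I expect to be the main obstacle — is choosing the glue maps $\phi_i$ so that $F$ carries (an approximation of) the high‑order part of $|x|$ rather than useless low‑order residue information. The idea, following Garner's incremental form of the CRT, is to have the chain reconstruct $|x|-b$ digit by digit in the mixed radix $(r_1,r_2,\dots)$, but to \emph{discard} the two lowest mixed‑radix digits (weights $1$ and $r_1$, which together account for everything below $r_1r_2$) by folding them into the choice of $t$ and the first glue map, so that what survives downstream is $\lfloor(|x|-b)/M'\rfloor$ up to an additive carry error in $\{0,1,\dots,k-2\}$ introduced by the successive approximate divisions. Each $\phi_i$ is an injection (a bijection onto its image) because it is implemented as ``multiply the running residue by a fixed unit modulo $r_{i+1}$ and shift''; this gives the ``furthermore'' clause verbatim — every output node of the $r_i$ box has exactly one forward neighbour, and every input node of the $r_{i+1}$ box at most one backward neighbour. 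Declaring the sink nodes named $lM'\bmod r_k$, $0\le l<r_k$, to be good then amounts to accepting exactly those $w$ whose reconstructed quotient is a multiple of $M'$; pulling this back through the CRT, the good weights form a union of arithmetic progressions modulo $\prod r_i\ge 2M$, and intersecting with $[b,b+M-1]$ yields $r_2$ intervals, each of length $(k-1)M'$ (the carry‑error slack) spaced $M'$ apart, so every point of $[b,b+M-1]$ lies in precisely $k-1$ of them — a count one checks by the CRT/pigeonhole bookkeeping $r_2\cdot(k-1)M'=(k-1)M$.

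The routine parts are the prime estimate and the verification that a stacked sequence of modulus boxes is a strict permutation branching program (immediate, since injective glue maps preserve the in‑degree‑$\le 1$ property). The delicate part is making the carry analysis of the approximate divisions tight enough to get the stated interval length $(k-1)M'$ and multiplicity $k-1$ \emph{exactly} rather than up to constants; here one must track, across all $k-1$ glue steps, the discrepancy between the true quotient $\lfloor(|x|-b)/M'\rfloor$ and the value the chain actually computes, and argue it never exceeds $k-2$. For this I would follow the error bookkeeping of~\cite{ST97} directly.
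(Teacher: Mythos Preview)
The paper does not prove this statement at all: it is stated as a \emph{Fact} with citation~\cite{ST97} and used as a black box inside the proof of Lemma~\ref{pbpmaj}. The only commentary the paper adds is the one-line remark that the presentation ``differs from~\cite{ST97} in that we require that the $r_i$ are increasing so that we can join them without creating nodes with fan-in larger than~1,'' which is exactly the injectivity observation you already make in your ``furthermore'' paragraph. So there is no in-paper proof to compare your proposal against.

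That said, your reconstruction is along the right lines and matches the spirit of the Sinha--Thathachar approximate divider: choose coprime moduli in a short window so their product exceeds $2M$, stack the corresponding modulus boxes, and use CRT/mixed-radix reasoning to argue that the sink label encodes an approximate quotient $\lfloor(|x|-b)/M'\rfloor$ with bounded carry error. Two small cautions if you intend to flesh this out. First, your prime-counting estimate is fine asymptotically, but to get the \emph{exact} constants $4\log M$, $12\log M$, and $k\le\log M$ claimed in the statement you will need the concrete bounds from~\cite{ST97} rather than a generic $\Omega(\log M/\log\log M)$; this is where the specific window width matters. Second, your outline of the carry analysis is correct in shape, but the step where you assert that the good weights ``form a union of arithmetic progressions modulo $\prod r_i$'' intersected with $[b,b+M-1]$ gives \emph{exactly} $r_2$ intervals of length \emph{exactly} $(k-1)M'$ with multiplicity \emph{exactly} $k-1$ is the one place you defer to~\cite{ST97}; that bookkeeping is genuinely the content of their lemma, and your identity $r_2\cdot(k-1)M'=(k-1)M$ is a necessary consistency check but not by itself a proof of those three exact counts.
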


The above differs from the presentation in \cite{ST97} in that we require that the $r_i$ are increasing so that we can join them without creating nodes with fan-in larger than 1. This means that every $r_i$ box for $i>1$ has a few input nodes that are not used.

Note that our goal is to know whether $|x|$ is greater than $n/2$ or not. Effectively this means there are three kinds of bottom layer nodes in the branching program constructed above (for $b=0$): those where we know that all inputs reaching the sink have $n/2>|x|$, at which point we can reject, those where $n/2<|x|$, where we accept, and undecided nodes. A bottom layer node is undecided, if the interval of possible $|x|$ reaching that sink contains $n/2$. At undecided nodes the interval of possible values of $|x|$ has been reduced to size $(k-1)M/r_2$, i.e., a $(k-1)/r_2<1/4$ fraction of the the original interval. Furthermore, there are $k-1$ undecided nodes (since $n/2$ is in that many intervals), but the intervals for those nodes stretch to at most $(k-1) M/r_2$ beyond $n/2$ on both sides, hence
the union of the intervals of all undecided bottom layer nodes is an interval of size at most $2(k-1)M/r_2\leq M/2$.
Hence, this construction can be iterated (at most $\log n$ times) to decide Majority on all inputs.

Now we need to argue that the whole construction can be made into a permutation branching program.
Obviously any mod-$r$ box can be computed by a strict permutation BP of width $r$ and length $n$. The connections between the $k$ boxed are injective mappings. Hence  so the whole constructions for the above fact can be made into a permutation branching program, where dummy nodes need to be added to bring all layers to the same width ($r_k$).

The branching program for Majority is then an iteration of the above construction of permutation branching programs. In each level of the iteration some nodes accept, some reject, and some continue on a smaller interval. For all undecided sink nodes we can assume that they continue using the same interval of size at most $M/2$. This continues until the intervals are very short ($M\leq\log n$), at which the problem can be solved by counting.

  To do the same iteration in a permutation branching program we need to do the following. We want to turn a building block $B_i$ of the iteration (a permutation branching program as in Fact \ref{sinha}) into a permutation branching program that has only 3 sinks reached by inputs (plus some sinks that are never reached). To do this we first use the original program, followed by 3 copies of the same program in reverse. We connect the undecided sinks of the upper program into the corresponding vertices in the first reversed lower program, similarly the accepting and rejecting sinks into the corresponding vertices of the other two reversed programs. Then each input that is undecided by $B_i$ will end up at the node corresponding to the starting node of the first reverse copy. Similarly inputs that are accepted by $B_i$ will leave the second reverse copy at the node corresponding to the starting node of $B_i$ etc. Using dummy nodes this program can be extended to a permutation branching program, with width increased by a factor of 3 and length by 2. Each input leads to one of three nodes. We can now connect the undecided sink of the above construction to the starting vertex of the next block $B_{i+1}$. To turn the whole construction into a strict permutation branching program the accepting and rejecting bottom vertices are connected to $O(n\log^2 n) $ extra vertices that remember at which layer/vertex the inputs were accepted/rejected.

  The whole construction yields a permutation branching program for Majority. The length of the program is $O(\log n\cdot \log n\cdot n)$, for the $\log n$ iterations, the $k\leq \log n$ boxes that have length $n$. Each level of the program has width at most $O(\log M)$ for the mod $r_i$ boxes and the constant factors to turn things into a permutation BP (plus $2\log n$ vertices for accepting/rejecting paths)). Hence the total size of the program is $O(n\log^3 n)$.

\end{proof}

\subsection{Lower Bound for Formulae with Symmetric Gates}

\begin{proof}[Proof of Theorem \ref{thm:neci}]
Fix $f$ and $B_1,\ldots, B_k$ and any formula $F$ of size at most $n^2$ computing $f$ consisting of symmetric gates only. Define $F_j$ to be the subtree of $F$, whose leaves are the variables in $B_j$ (and root is the output gate of the formula, and denote by $L_j$ the number of leaves of $F_j$. Then the size of $F$ is $\sum L_j$. We will show that $D_j(f)\leq O(L_j\cdot \log n)$.

Alice has all the variables except those in $B_j$, which go to Bob. Alice (and Bob) have to evaluate all the gates in $F_j$ (this includes the root).
They will evaluate the gates in (reverse) topological order. All the leaves are known to Bob. Denote by $P$ the set of paths in $F_j$ that start at a leaf or a gate of fan-in at least 2 inside $F_j$, and end at a gate of fan-in at least 2 inside $F_j$ and have no such gates in between. Then $L_j\geq |P|$. Also denote by $G$ the set of gates in $F_j$ that have fan-in larger than 1 inside $F_j$, again $L_j\geq |G|$. We will show that the communication is at most $O(L_j\cdot\log n)$.

Bob goes over paths $p\in P$ and gates in $g\in G$ in reverse topological order (i.e., from the leaves up).
Let $p=v_1,\ldots, v_t$ be the vertices of some $p$ in reverse topological order (i.e., the root is last). Denote by $f_p$ the gate at $v_{t-1}$, the last vertex that has fan-in 1 in $p$. Alice can tell Bob which function is computed at $v_{t-1}$ in terms of the value already computed (by Bob) at $v_1$. This takes 2 bits. Hence the total communication to evaluate paths in $P$ is $2|P|$. For each $g\in G$ there are at least 2 inputs in $F_j$ that have already been computed by Bob. Since the gate at $g$ is symmetric, it is sufficient for Alice to say how many of her inputs to $g$ evaluate to 1, which takes at most $2\log n$ bits unless the formula is larger than $n^2$. So the total communication is at most $O(|P|+|G|\log n)$, and $|G|,|P|\leq L_j$, unless $F$ has size larger than $n^2$ already.

\end{proof}

\subsection{Pointer Jumping}

\begin{proof}[Proof Sketch for Theorem \ref{pj}]
To show part 1) we use a protocol using $nk$ pipes, organized into $k$ blocks. If Alice has input $f_A$, then she connects the tap to pipe $f_A(v_1)$ in block 1. For all even numbered blocks $2j$ she connects the $i$th pipe in block $2j$ to pipe $f_A(i)$ in block $2j+1$.
Bob connects for all odd numbered blocks the $i$th pipe in block $2j+1$ to pipe $f_B(i)$ in block $2j+2$.

Assume that $k$ is odd. Then the $k$th vertex of the path is on Bob's side. If $PJ_k(f_A,f_B)=0$ then the XOR of $f_k(v_0)$ is 0 and the water needs to spill on Alice's side. Hence, in block $k$, for all pipes $i$ with even $i$, Alice leaves the pipe open instead of connecting it to a pipe in block $k-1$. She does make the connections as described above for all odd pipes in block $k$.

Similarly, if $k$ is even, then the last vertex is on Alice's side and if $f_k(v_0)$ is odd the spill needs to be on Bob's side. Hence Bob skips all the connections between blocks $k-1$ and $k$ for odd numbered pipes $i$ in block $k$.

Note that $f_A$ and $f_B$ are bijective, hence the connections made are legal.
In total we use $kn$ pipes. It is clear that the garden-hose protocol described above computes $PJ_k$.

Now we turn to part 2. Take any time $k-1$ garden-hose protocol for $PJ_k$ using $s$ pipes. Due to the simulation in Theorem \ref{thm:simround} we get a $k-1$ round communication protocol (Alice starting) with communication $(k-1)\log s$. But Nisan and Wigderson \cite{NW93} show that such protocols need communication $\Omega(n)$ for $k\leq n/(100\log n)$. Hence $s\geq2^{\Omega(n/k)}$.

The difficulty in applying their result is that Nisan and Wigderson analyze the complexity of $PJ_k$ for uniformly random inputs, not random {\em bijective} inputs $f_A$ resp. $f_B$. Hence we need to make some changes to their proof.
 These changes needed to make the argument work are minor, however: the uniform distribution on pairs of bijective functions is still a product distribution, and as long as $k=o(n)$ it is still true that at any vertex in the protocol tree the information about the next pointer is a small constant. The main difference to the original argument is that conditioning on the previous path introduces information about the next pointer due to the fact that vertices on the path can not be used again. This can easily be subsumed into the information given via the previous communication.
\end{proof}
\end{document}